\documentclass[superscriptaddress,pra,twocolumn,nofootinbib,notitlepage, floatfix]{revtex4-1}
\usepackage{amssymb}
\usepackage{dsfont}
\usepackage{complexity}
\usepackage[utf8]{inputenc} 
\usepackage[T1]{fontenc}    
\usepackage{hyperref}       
\usepackage{url}            
\usepackage{booktabs}       
\usepackage{amsfonts}       
\usepackage{nicefrac}       
\usepackage{microtype}      
\usepackage{lipsum}
\usepackage{amsmath}
\usepackage{physics}
\usepackage{placeins}
\usepackage{bbold}
\usepackage{tikz}
\usepackage{forest}
\usepackage{multirow}
\usetikzlibrary{calc,arrows.meta,positioning, shapes.geometric, arrows, shapes.symbols,shapes.callouts,patterns}

\tikzset{
    every node/.style={font=\sffamily\small},
    main node/.style={thick,circle ,draw},
    visible node/.style={thick,rectangle ,draw}
}
\usepackage{qcircuit}
\usepackage{graphicx}
\usepackage{graphics}
\usepackage{amsthm}
\usepackage[caption=false]{subfig}
\usepackage{latexsym}
\usepackage{floatrow}
\usepackage{mathtools}
\usepackage{verbatim}
\usepackage{upgreek}
\usepackage{textgreek}
\usepackage{thmtools}
\usepackage{thm-restate}

\usepackage{hyperref}

\usepackage{cleveref}

\newtheorem{theorem}{Theorem}

\newtheorem{proposition}[theorem]{Proposition}
\newtheorem{lemma}[theorem]{Lemma}

\DeclareMathOperator{\EX}{\mathbb{E}}

\begin{document}

\title{Entanglement Induced Barren Plateaus}
\author{Carlos Ortiz Marrero}
\affiliation{Data Sciences and Analytics Group,
  Pacific Northwest National Laboratory,
  Richland, WA 99354 }
\email{carlos.ortizmarrero@pnnl.gov}

\author{M\'aria Kieferov\'a}
\affiliation{  Centre for Quantum Computation and Communication Technology,
Centre for Quantum Software and Information,
University of Technology Sydney,
NSW 2007, Australia}
\email{maria.kieferova@uts.edu.au}

\author{Nathan Wiebe}
\affiliation{ Department of Computer Science, University of Toronto, ON M5S 1A1, Canada}
\email{nwiebe@cs.toronto.edu}

\date{\today}

\begin{abstract}

We argue that an excess in entanglement between the visible and hidden units in a Quantum Neural Network can hinder learning. In particular, we show that quantum neural networks that satisfy a volume-law in the entanglement entropy will give rise to models not suitable for learning with high probability. Using arguments from quantum thermodynamics, we then show that this volume law is typical and that there exists a barren plateau in the optimization landscape due to entanglement.  More precisely, we show that for any bounded objective function on the visible layers, the Lipshitz constants of the expectation value of that objective function will scale inversely with the dimension of the hidden-subsystem with high probability. We show how this can cause both gradient descent and gradient-free methods to fail. We note that similar problems can happen with quantum Boltzmann machines, although stronger assumptions on the coupling between the hidden/visible subspaces are necessary. We highlight how pretraining such generative models may provide a way to navigate these barren plateaus.

\end{abstract}

\maketitle

\section{Introduction}
In recent years the prospects of quantum machine learning (QML) and quantum deep neural network have gained notoriety in the scientific community. QML builds on the success of traditional machine learning and the potential for quantum speedup. The QML field has enjoyed increased attention for quantum algorithms for principal component analysis~\cite{lloyd2014quantum}, support vector machines~\cite{rebentrost2014quantum}, kernel methods~\cite{schuld2019quantum, havlivcek2019supervised}, and quantum neural networks (QNN)~\cite{schuld2014quest,wiebe2016quantum, farhi2018classification,schuld2019quantum,wiebe2019generative}  but experiences setbacks in the form of dequantization techniques~\cite{tang2019quantum, tang2018quantum, gilyen2018quantum}.

A key part of a successful quantum machine learning algorithm is an efficient training algorithm. In recent years, several barren plateau results~\cite{mcclean2018barren, cerezo2020cost, sharma2020trainability, wang2020noise,holmes2020barren} put limitations on the gradient-based training of QNNs. Our result complements the growing literature on barren plateaus in quantum computing. McClean et al.~\cite{mcclean2018barren} first showed that unitary quantum neural networks generically suffer from vanishing gradients exponentially in the number of qubits. This issue stems from the concentration of measure~\cite{bremner_random, PhysRevLett.102.190501} and was subsequently demonstrated for other QNNs~\cite{cerezo2020cost, sharma2020trainability}. Another type of a barren plateau emerges from hardware noise in the system~\cite{wang2020noise}. The key observation that we put forward in this work is that barren plateaus can occur because of an excess of entanglement in deep quantum models.

In this paper, we prove that entanglement between visible and hidden units hinders the learning process. 
Inclusion of hidden units is essential in traditional machine learning. Without them, the expressive power of neural networks would be severely limited and deep learning all but impossible. In spite of this, there has been very little attention paid to the effect of hidden units on the training of QNNs. Surely, the expressive power of hidden units would translate to the quantum world? Numerical experiments seem to contradict this intuition. A small scale numerical study~\cite{kieferova2017tomography} showed that the inclusion of hidden units to quantum Boltzmann machines did not lead to a higher quality of reproduction. While this could be explained due to the small size of the QNN and simple data, in our work we show that quantum Boltzmann machines do not benefit from a large number of hidden units. 

We build intuition from exploring the statistical relationship between a random state and maximally entangled states in a bipartite quantum system.  A classic thermalization result~\cite{popescu2006entanglement} shows that for a random initial state, the state on the visible units is with high probability exponentially close to a maximally mixed state. However, if the state is chosen from a k-design, its distance to a maximally mixed state is bounded by a polynomial in $k$~\cite{low2009large}. 
We show that it is very difficult to escape from this state because the gradients will be exponentially small. As such, for a wide array of QNNs, randomness and entanglement hinder the training.

This surplus of entanglement to some extent defeats the purpose of deep learning by causing information to be non-locally stored in the correlations between the layers rather than in the layers themselves.  As a result, when one tries to remove the hidden units, as is customary in deep learning, we find that the resulting state is close to the maximally mixed state.  Indeed, we show that such situations are generic as well and that gradient descent methods are unlikely to allow the user to escape from such a plateau at low cost.  This observation holds for both "feedforward" QNNs as well as Boltzmann machines and suggests that if quantum effects are to be used to improve classical models then they must be used surgically.  Furthermore, our work establishes a link between the thermalization literature and quantum machine learning that has been hitherto absent from the literature.

We focus on two types of QNNs depicted in Figure~\ref{fig:models}, feed-forward unitary quantum neural networks inspired by QAOA, and Quantum Boltzmann machines~\cite{amin2018quantum, kieferova2017tomography, wiebe2019generative}.

\begin{figure}[tb]
  \subfloat[]{%
    \includegraphics[width=0.9\textwidth]{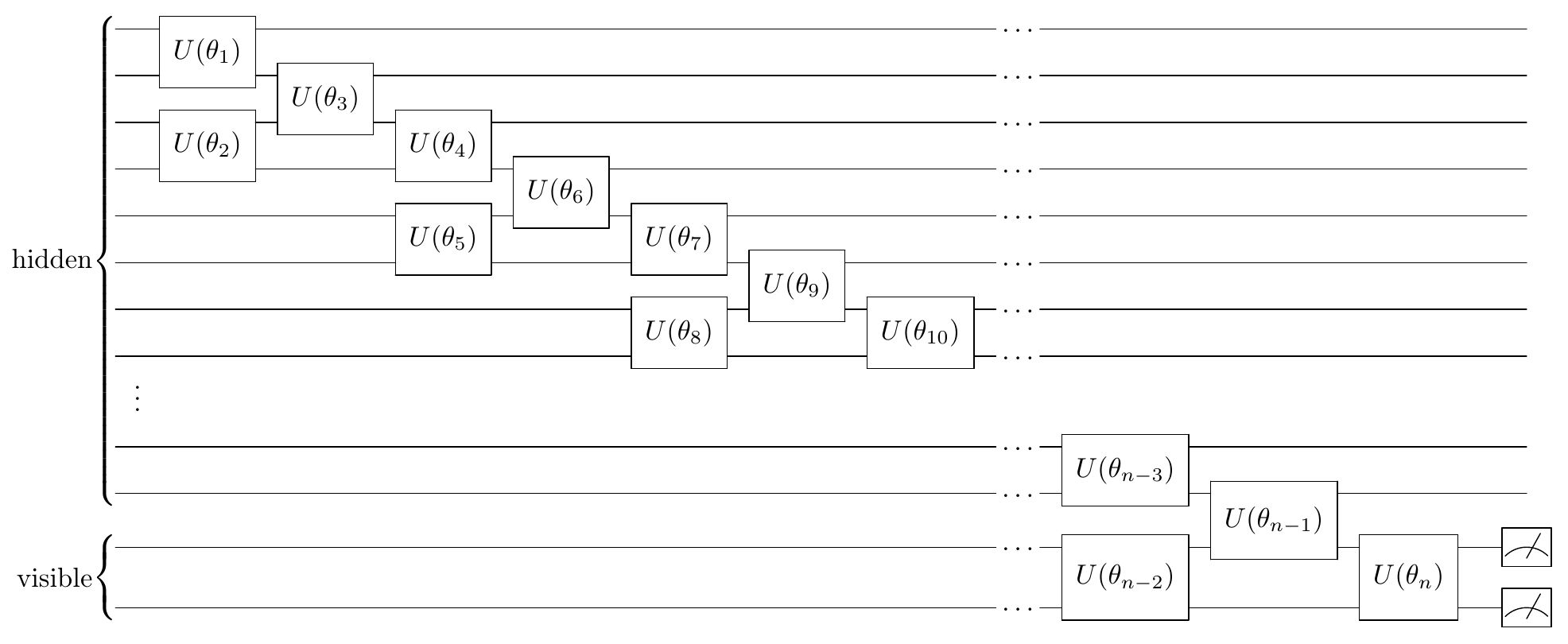}\label{deep_net}%
  }\\
  
  \subfloat[ ]{
      \begin{tikzpicture}[-,>={Stealth[round,sep]},shorten >=1pt,auto,node distance=1 cm, every node/.style={scale=0.85}]

    \node[main node] (h1) at (0.5,0)  {};
    \node[main node] (h2) [right =of h1] {};
    \node[main node] (h3) [right =of h2] {};
    \node[main node] (h4) [right =of h3] {};
    
    \node[text width=3cm] at (-0.5, 0) 
    {hidden};
    
    \node[text width=3cm] at (-0.5,-2) 
    {visible};

    \node[visible node] (v1) at (0, -2 ) {};
    \node[visible node] (v2) [right =of v1] {};
    \node[visible node] (v3) [right =of v2]{};
    \node[visible node] (v4) [right =of v3]{};
    \node[visible node] (v5) [right =of v4]{};

    \draw (h1) -- (h2);
    \draw (h3) -- (h2);
    \draw (h3) -- (h4);
    
    \draw (v1) -- (v2);
    \draw (v3) -- (v2);
    \draw (v3) -- (v4);
    \draw (v5) -- (v4);
    
    \draw (v1) -- (h1);
    \draw (v1) -- (h2);
    \draw (v1) -- (h3);
    \draw (v1) -- (h4);
    
    \draw (v2) -- (h1);
    \draw (v2) -- (h2);
    \draw (v2) -- (h3);
    \draw (v2) -- (h4);
    
    \draw (v3) -- (h1);
    \draw (v3) -- (h2);
    \draw (v3) -- (h3);
    \draw (v3) -- (h4);
    
    \draw (v4) -- (h1);
    \draw (v4) -- (h2);
    \draw (v4) -- (h3);
    \draw (v4) -- (h4);
    
    \draw (v5) -- (h1);
    \draw (v5) -- (h2);
    \draw (v5) -- (h3);
    \draw (v5) -- (h4);

\end{tikzpicture}\label{qbm}
}
  \caption[]{Examples of QNNs. \subref{deep_net}~
    A quantum unitary network characterized by a  circuit with  parameterized unitaries $U_j = e^{-i \theta_j H_j}$ where $\theta_j$ are the parameters we aim to learn and $H_j$ Hamiltonians that specify the QNN.  
The output is then $U(\theta_1, \dots, \theta_n)\ket{\psi_0}$ where $\ket{\psi_0}$ can be taken to be  $\ket{0\dots 0}$ for generative learning. 
In this model, \emph{ visible units} correspond to the qubits on which we evaluate the objective function, in this case the last two registers. The remaining qubits are called \emph{hidden units}. 
\subref{qbm}~ Quantum Boltzmann machines defined on a graph. Each edge and each vertex correspond to a weight on a local Hamiltonian corresponding to the pair of qubits or a single qubit.  The top layer of units (circles) corresponds to visible units and the bottom layer (rectangles) are hidden units.
QBMs models data as a thermal state 
$\frac{e^{-H(\theta)}}{Z(\theta)}:= \frac{e^{-\sum_i \theta_i H_i}}{\text{Tr}\left(e^{-\sum_i \theta_i H_i} \right)}.$ 
Without loss of generality, we will take ${\rm Tr}(H) = 0$ for all quantum Boltzmann machines.  The aim when training a quantum Boltzmann machine is to learn a vector $\theta$ such that for a training objective function given by $O_{\rm obj}$ that acts on the visible subsystem, we maximize ${\rm Tr}(O_{\rm obj} {\rm Tr}_h(e^{-H(\theta)}/Z(\theta))$.
}\label{fig:models}
\end{figure}

Quantum Boltzmann machines can also be trained generatively~\cite{kieferova2017tomography}, meaning that rather than optimizing a training objective function that is a linear function of the density operator such as ${\rm Tr}( O_{\rm Obj} {\rm Tr}_h (e^{-H}/Z))$, we aim to optimize a non-linear function of the density operator such as the quantum relative entropy, i.e.  $S(\rho_{\rm train}||\rho(\theta))= \rm{Tr}(\rho_{\rm train} \log(\rho_{\rm train}) - \rho_{\rm train} \log(\rho(\theta)))$, by generating a quantum state $\rho(\theta)$ using the Boltzmann machine that optimizes this divergence with the training density operator.

\section{The impact of Entanglement on Deep Models}
The central question our work is to understand how the entanglement in the neural network affects the visible units. Instead of providing speedup, entanglement between visible and hidden units causes thermalization on the visible subsystem.  Thus, the inclusion of entanglement between the hidden and visible layers of a QNN can be, unless carefully controlled, harmful to the neural network model.

The relationship between the representational power of a neural network and the degree of entanglement between the visible and hidden systems was first discussed in~\cite{sarma2019machine}; however, here we re-examine this question and arrive at a different conclusion.  Specifically, we conclude that large amounts of entanglement (as quantified by a volume law) can be catastrophic for the model; whereas an area law scaling for the entanglement entropy between the hidden and visible can often be tolerated.

To see this, we need to make a few formal definitions.
Let $\mathcal{S}\subset \mathbb{C}^{D_vD_h \times D_vD_h}$ be a family of parameterized density operators where $D_v=2^{n_v}$ is the dimension of the $n_v$-qubit visible subspace and $D_h=2^{n_h}$ the dimension of the hidden subspace.  For each $\rho \in \mathcal{S}$, the qubits can be uniquely assigned to the vertices of a graph $\mathcal{G}$ on a vertex set $\mathcal{V}_h \bigcup \mathcal{V}_v$ where $\mathcal{V}_v$ consists of $\log_2(D_v)=n_v$ and $\mathcal{V}_h$ of $\log_2(D_h)=n_h$ qubits.  We then will define $n_v^{(j)}$ to be the number of vertices in $\mathcal{V}_v$ that are at least graph distance $j$ away from the vertices in $\mathcal{V}_h$ and define $n_h^{(j)}$ to be the analogous number for the vertices in $\mathcal{V}_h$.  We then say that $\mathcal{S}$ satisfies an area law if for all $\rho\in \mathcal{S}$, $S({\rm Tr}_h(\rho)) \in \Theta(n_v^{(1)})$ similarly we say that $\mathcal{S}$ satisfies a volume law if $|S({\rm Tr}_h(\rho)) - \log(D_v)|\in \Theta(D_v/D_h)$ where $S$ is the von Neumann entropy. With these definitions in place we can concisely claim our result.

\begin{proposition}  Let $\mathcal{S}$ be a family of density operators with visible dimension $D_v$ and hidden dimension $D_h \ge D_v$ with $n_v^{(1)}$ and $n_h^{(1)}$ qubits in the first visible and hidden layers respectively.  We then have that for any operator on the visible sub-system $O_{\rm obj}$ and $\rho \in \mathcal{S} \subset \mathbb{C}^{D_vD_h \times D_vD_h}$,
\begin{equation}
    |{\rm Tr}((O_{\rm obj} \otimes I) (\rho -I/(D_vD_h)))| 
\end{equation}
is in  $ O\Big(\|O_{\rm Obj}\|_\infty \sqrt{\log(D_v) - n^{(1)}_v }\Big)$
 if $\mathcal{S}$ satisfies an area law and is in $O\Big(\|O_{\rm Obj}\|_\infty \sqrt{(D_v/D_h)}\Big)$ if $\mathcal{S}$ satisfies a volume law.
\end{proposition}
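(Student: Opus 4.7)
The plan is to reduce the stated quantity to a trace-distance bound on the reduced visible state $\rho_v := {\rm Tr}_h(\rho)$ and then to feed the area- or volume-law hypothesis through the quantum Pinsker inequality. First I would observe that because $O_{\rm obj}$ acts only on the visible subsystem, ${\rm Tr}((O_{\rm obj}\otimes I)\rho) = {\rm Tr}(O_{\rm obj}\rho_v)$ and ${\rm Tr}((O_{\rm obj}\otimes I)(I/(D_vD_h))) = {\rm Tr}(O_{\rm obj})/D_v = {\rm Tr}(O_{\rm obj}(I/D_v))$, so the quantity of interest is just $|{\rm Tr}(O_{\rm obj}(\rho_v - I/D_v))|$. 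A single application of matrix Hölder (or operator/trace-norm duality) then gives
\begin{equation}
|{\rm Tr}(O_{\rm obj}(\rho_v - I/D_v))| \;\le\; \|O_{\rm obj}\|_\infty\, \|\rho_v - I/D_v\|_1 .
\end{equation}

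Next I would invoke the quantum Pinsker inequality, $\|\rho_v - I/D_v\|_1 \le \sqrt{2\ln 2\; S(\rho_v\,\|\,I/D_v)}$, and use the fact that relative entropy against the maximally mixed state collapses to $S(\rho_v\,\|\,I/D_v) = \log D_v - S(\rho_v)$. Combining with the previous line yields the clean intermediate bound
\begin{equation}
|{\rm Tr}((O_{\rm obj}\otimes I)(\rho - I/(D_vD_h)))| \;\le\; \|O_{\rm obj}\|_\infty \sqrt{2\ln 2\,(\log D_v - S(\rho_v))}.
\end{equation}
At this point every remaining step is simply substitution of the hypothesis on $S(\rho_v)$.

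The proof then closes by cases. Under the volume law, $\log D_v - S(\rho_v)\in\Theta(D_v/D_h)$ by definition, producing the claimed $O(\|O_{\rm obj}\|_\infty \sqrt{D_v/D_h})$. Under the area law, $S(\rho_v)\in\Theta(n_v^{(1)})$, so $\log D_v - S(\rho_v)$ is of the order $\log D_v - n_v^{(1)}$ up to the implicit constant, which the outer $O(\cdot)$ absorbs, giving $O(\|O_{\rm obj}\|_\infty \sqrt{\log D_v - n_v^{(1)}})$. I do not anticipate a substantive obstacle here: the backbone is the Hölder/Pinsker pair and the hypotheses are tailored exactly to control the one scalar that appears after Pinsker. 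The only real care lies in bookkeeping the base of the logarithm when invoking Pinsker (nats versus bits, which only contributes a $\sqrt{\ln 2}$ absorbed into the $O(\cdot)$) and in remembering that the hypothesis $D_h\ge D_v$ ensures the volume-law side of the bound is meaningful rather than vacuous.
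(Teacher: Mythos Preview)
Your proposal is correct and follows essentially the same approach as the paper: both proofs bound the quantity via duality/H\"older to get $\|O_{\rm obj}\|_\infty\|\rho_v - I/D_v\|_1$, then apply the quantum Pinsker inequality to convert trace distance into $\sqrt{\log D_v - S(\rho_v)}$, and finally substitute the area- or volume-law hypothesis. Your write-up is slightly more explicit about the reduction to the visible subsystem and about the nats-versus-bits constant in Pinsker, but the backbone is identical.
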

\begin{proof}
The proof follows from standard inequalities for the quantum relative entropy 
\begin{align}
    \frac{1}{2}\|{\rm Tr}_h(\rho) - I/D_v\|_1^2&\le S({\rm Tr}_h(\rho)|| I/D_v) \nonumber\\
    &= -S({\rm Tr}_h(\rho))+ \log(D_v).\label{eq:trIneq}
\end{align}
Then from the von-Neumann trace inequality we have
\begin{align}
    &|{\rm Tr}((O_{\rm Obj} \otimes I) (\rho - I/D))|\le \|O_{\rm Obj}\|_\infty \|{\rm Tr}_h (\rho) - I/D_v\|_1 \nonumber\\
    &\le \|O_{\rm Obj}\|_\infty\sqrt{2(\log(D_v)-S({\rm Tr}_h(\rho)))}
\end{align}
If $\rho$ satisfied an area-law scaling then  $S({\rm Tr}_h(\rho)) \in \Theta( n_v^{(1)} ) $.  From which the claimed result for the area law scaling immediately follows.  If instead we assume that $\rho$ obeys a volume-law then $|S({\rm Tr}_h(\rho)) - \log(D_v)| \in \Theta(D_v/D_h) = \Theta(2^{n_v-n_h})$
\end{proof}

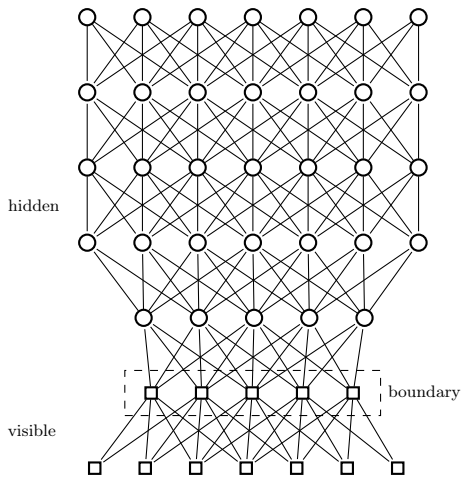
\begin{figure}[tb]
\begin{center}
    
\begin{tikzpicture}[-,>={Stealth[round,sep]},shorten >=1pt,auto,node distance=0.5 cm, every node/.style={scale=0.7}]

    \node[main node] (h1) at (-0.5,0)  {};
    \node[main node] (h2) [right =of h1] {};
    \node[main node] (h3) [right =of h2] {};
    \node[main node] (h4) [right =of h3] {};
    \node[main node] (h5) [right =of h4] {};
    \node[main node] (h6) [right =of h5] {};
    \node[main node] (h7) [right =of h6] {};
    
    \node[main node] (h11) at (-0.5,-1)  {};
    \node[main node] (h12) [right =of h11] {};
    \node[main node] (h13) [right =of h12] {};
    \node[main node] (h14) [right =of h13] {};
    \node[main node] (h15) [right =of h14] {};
    \node[main node] (h16) [right =of h15] {};
    \node[main node] (h17) [right =of h16] {};
    
    \node[main node] (h21) at (-0.5,-2)  {};
    \node[main node] (h22) [right =of h21] {};
    \node[main node] (h23) [right =of h22] {};
    \node[main node] (h24) [right =of h23] {};
    \node[main node] (h25) [right =of h24] {};
    \node[main node] (h26) [right =of h25] {};
    \node[main node] (h27) [right =of h26] {};
    
    \node[main node] (h31) at (-0.5,-3)  {};
    \node[main node] (h32) [right =of h31] {};
    \node[main node] (h33) [right =of h32] {};
    \node[main node] (h34) [right =of h33] {};
    \node[main node] (h35) [right =of h34] {};
    \node[main node] (h36) [right =of h35] {};
    \node[main node] (h37) [right =of h36] {};
    
    \node[main node] (h41) at (0.25,-4)  {};
    \node[main node] (h42) [right =of h41] {};
    \node[main node] (h43) [right =of h42] {};
    \node[main node] (h44) [right =of h43] {};
    \node[main node] (h45) [right =of h44] {};
    
    \node[text width=3cm] at (-0.5, -2.5) 
    {hidden};
    
    \node[text width=3cm] at (-0.5,-5.5) 
    {visible};
    
    \node[text width=4cm] at (4.9, -5) 
    {boundary};

    \node[visible node] (v1) at (0.35, -5 ) {};
    \node[visible node] (v2) [right =of v1] {};
    \node[visible node] (v3) [right =of v2]{};
    \node[visible node] (v4) [right =of v3]{};
    \node[visible node] (v5) [right =of v4]{};
    
    \node[visible node] (v11) at (-0.4, -6 ) {};
    \node[visible node] (v12) [right =of v11] {};
    \node[visible node] (v13) [right =of v12]{};
    \node[visible node] (v14) [right =of v13]{};
    \node[visible node] (v15) [right =of v14]{};
    \node[visible node] (v16) [right =of v15]{};
    \node[visible node] (v17) [right =of v16]{}; 

    \draw (h1) -- (h11);
    \draw (h1) -- (h12);
    \draw (h1) -- (h13);
    
    \draw (h2) -- (h11);
    \draw (h2) -- (h12);
    \draw (h2) -- (h13);
    \draw (h2) -- (h14);
    
    \draw (h3) -- (h11);
    \draw (h3) -- (h12);
    \draw (h3) -- (h13);
    \draw (h3) -- (h14);
    \draw (h3) -- (h15);
    
    \draw (h4) -- (h12);
    \draw (h4) -- (h13);
    \draw (h4) -- (h14);
    \draw (h4) -- (h15);
    \draw (h4) -- (h16);
    
    \draw (h5) -- (h13);
    \draw (h5) -- (h14);
    \draw (h5) -- (h15);
    \draw (h5) -- (h16);
    \draw (h5) -- (h17);
    
    \draw (h6) -- (h14);
    \draw (h6) -- (h15);
    \draw (h6) -- (h16);
    \draw (h6) -- (h17);
    
    \draw (h7) -- (h15);
    \draw (h7) -- (h16);
    \draw (h7) -- (h17);

    \draw (h21) -- (h11);
    \draw (h21) -- (h12);
    \draw (h21) -- (h13);
    
    \draw (h22) -- (h11);
    \draw (h22) -- (h12);
    \draw (h22) -- (h13);
    \draw (h22) -- (h14);
    
    \draw (h23) -- (h11);
    \draw (h23) -- (h12);
    \draw (h23) -- (h13);
    \draw (h23) -- (h14);
    \draw (h23) -- (h15);
    
    \draw (h24) -- (h12);
    \draw (h24) -- (h13);
    \draw (h24) -- (h14);
    \draw (h24) -- (h15);
    \draw (h24) -- (h16);
    
    \draw (h25) -- (h13);
    \draw (h25) -- (h14);
    \draw (h25) -- (h15);
    \draw (h25) -- (h16);
    \draw (h25) -- (h17);
    
    \draw (h26) -- (h14);
    \draw (h26) -- (h15);
    \draw (h26) -- (h16);
    \draw (h26) -- (h17);
    
    \draw (h27) -- (h15);
    \draw (h27) -- (h16);
    \draw (h27) -- (h17);

    \draw (h21) -- (h31);
    \draw (h21) -- (h32);
    \draw (h21) -- (h33);
    
    \draw (h22) -- (h31);
    \draw (h22) -- (h32);
    \draw (h22) -- (h33);
    \draw (h22) -- (h34);
    
    \draw (h23) -- (h31);
    \draw (h23) -- (h32);
    \draw (h23) -- (h33);
    \draw (h23) -- (h34);
    \draw (h23) -- (h35);
    
    \draw (h24) -- (h32);
    \draw (h24) -- (h33);
    \draw (h24) -- (h34);
    \draw (h24) -- (h35);
    \draw (h24) -- (h36);
    
    \draw (h25) -- (h33);
    \draw (h25) -- (h34);
    \draw (h25) -- (h35);
    \draw (h25) -- (h36);
    \draw (h25) -- (h37);
    
    \draw (h26) -- (h34);
    \draw (h26) -- (h35);
    \draw (h26) -- (h36);
    \draw (h26) -- (h37);
    
    \draw (h27) -- (h35);
    \draw (h27) -- (h36);
    \draw (h27) -- (h37);

    \draw (h41) -- (h31);
    \draw (h41) -- (h32);
    \draw (h41) -- (h33);
    \draw (h41) -- (h34);
    
    \draw (h42) -- (h31);
    \draw (h42) -- (h32);
    \draw (h42) -- (h33);
    \draw (h42) -- (h34);
    \draw (h42) -- (h35);
    
    \draw (h43) -- (h32);
    \draw (h43) -- (h33);
    \draw (h43) -- (h34);
    \draw (h43) -- (h35);
    \draw (h43) -- (h36);
    
    \draw (h44) -- (h33);
    \draw (h44) -- (h34);
    \draw (h44) -- (h35);
    \draw (h44) -- (h36);
    \draw (h44) -- (h37);
    
    \draw (h45) -- (h34);
    \draw (h45) -- (h35);
    \draw (h45) -- (h36);
    \draw (h45) -- (h37);

    \draw (v1) -- (h41);
    \draw (v1) -- (h42);
    \draw (v1) -- (h43);
    
    \draw (v2) -- (h41);
    \draw (v2) -- (h42);
    \draw (v2) -- (h43);
    \draw (v2) -- (h44);
    
    \draw (v3) -- (h41);
    \draw (v3) -- (h42);
    \draw (v3) -- (h43);
    \draw (v3) -- (h44);
    \draw (v3) -- (h45);
    
    \draw (v4) -- (h42);
    \draw (v4) -- (h43);
    \draw (v4) -- (h44);
    \draw (v4) -- (h45);
    
    \draw (v5) -- (h43);
    \draw (v5) -- (h44);
    \draw (v5) -- (h45);

    \draw (v1) -- (v11);
    \draw (v1) -- (v12);
    \draw (v1) -- (v13);
    \draw (v1) -- (v14);
    
    \draw (v2) -- (v11);
    \draw (v2) -- (v12);
    \draw (v2) -- (v13);
    \draw (v2) -- (v14);
    \draw (v2) -- (v15);

    \draw (v3) -- (v12);
    \draw (v3) -- (v13);
    \draw (v3) -- (v14);
    \draw (v3) -- (v15);
    \draw (v3) -- (v16);

    \draw (v4) -- (v13);
    \draw (v4) -- (v14);
    \draw (v4) -- (v15);
    \draw (v4) -- (v16);
    \draw (v4) -- (v17);

    \draw (v5) -- (v14);
    \draw (v5) -- (v15);
    \draw (v5) -- (v16);
    \draw (v5) -- (v17);

\draw[dashed] (-0, -5.3) -- (3.4,-5.3) -- (3.4,-4.7) -- (-0,-4.7) -- (-0,-5.3);

\end{tikzpicture}\label{area_law}
\end{center}
  \caption{For an area law, the entanglement entropy scales as the number of qubits on the boundary (in the dashed rectangle). }

  \label{fig:area_law}
\end{figure}

This shows that if our quantum neural network outputs states that satisfy a volume law then asymptotically the predictions of the neural network would be no better than random guessing.  In contrast, quantum neural networks will not necessarily observe this problem if the entanglement entropy is characteristic of an area-law scaling unless the number of hidden units in the first layer becomes much larger than the number of visible units.  We, therefore, see that uncontrolled entanglement, such as that yielded by volume laws, can be catastrophic for deep quantum neural networks (i.e. for $D_h \gg D_v$) but the comparably limited entanglement yielded by area laws may be more desirable.  This means that when designing neural networks, it is vital to aim for sub-volume-law scaling. However, such states often have concise representations using matrix product states~\cite{eisert2010colloquium} and so might be no more performant than classical neural networks. Nonetheless, we show in~\ref{sec:volume-law} that such sub-volume law scalings are not typical and that almost all quantum neural networks within the ensembles we consider obey volume law scalings. 

\section{Typicality of Volume-Law Scaling}\label{sec:volume-law}
While area laws occur for certain systems, such as ground states of gapped translationally invariant Hamiltonians on lattices, we expect volume law scalings to be much more common.
This intuition can be made rigorous by making appropriate assumptions about the interactions between the visible and hidden layers in the model.  In particular, we assume that the quantum states on the joint system of the QNN approximate a Haar-random state.  In practice, this assumption is too strong as Haar random states typically require exponentially long quantum circuits to generate them.  Instead, we focus on ensembles generated by unitary $2$-designs, which model the states generated by random sequences of universal gates~\cite{harrow2009random}.

\begin{proposition}\label{prop:one}
Let $U\in \mathbb{C}^{D_vD_h\times D_vD_h}$ be drawn from a unitary $2$-design  and let $H= U^\dagger S U$ for some diagonal matrix $S\in \mathbb{C}^{D\times D}$. If either $\rho = U\ket{0}\!\bra{0}U^\dagger$ (unitary network) or $\rho = e^{-H}/{\rm Tr}(e^{-H})$ (Boltzmann machine) then any bounded operator $O_{\rm obj}\in \mathbb{C}^{D_v \otimes D_v}$ acting on the visible subspace we have that,
$$
\left|{\rm Tr}\left((O_{\rm obj}\otimes I)(\rho - I/D))\right)\right|\in O\left(\|O_{\rm obj}\|_{\infty}\sqrt{\frac{D_v}{D_h}} \right)
$$
with high probability over $U$.
\end{proposition}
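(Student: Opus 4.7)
The plan is to reduce the stated observable bound to one on $\|\mathrm{Tr}_h(\rho)-I/D_v\|_1$, exactly as in the proof of the previous proposition via the von Neumann trace inequality, and then to upgrade to the Hilbert--Schmidt norm, which is the natural object for $2$-design averaging. Using $\|X\|_1\le \sqrt{D_v}\,\|X\|_2$ on the visible subsystem, the conclusion will follow once we establish $\mathbb{E}_U\|\mathrm{Tr}_h(\rho)-I/D_v\|_2^2 \in O(1/D_h)$ and apply a standard concentration step.

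Both cases can be put in a common form. For the unitary network, $\rho = U A U^\dagger$ with $A=\ket{0}\!\bra{0}$. For the Boltzmann machine, the identity $e^{-H}=U^\dagger e^{-S}U$ gives $\rho = U^\dagger(e^{-S}/Z)U$; because the $2$-design condition constrains second moments in $U$ and $U^\dagger$ symmetrically, we may treat this on the same footing, writing $\rho = V A V^\dagger$ for a $2$-design-random $V$ with $A=e^{-S}/Z$. In either case $A$ is a fixed density operator and the only property of it that will enter the bound is the purity $p = \mathrm{Tr}(A^2)\in(0,1]$.

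To evaluate the average Hilbert--Schmidt distance I would use the swap trick
\begin{equation}
\mathrm{Tr}(\mathrm{Tr}_h(\rho)^2) = \mathrm{Tr}\!\left[(\rho\otimes\rho)(F_v\otimes I_h^{\otimes 2})\right],
\end{equation}
together with the 2-design identity $\mathbb{E}[\rho\otimes\rho] = \alpha\,I + \beta\,F$ on $(\mathcal{H}_v\otimes\mathcal{H}_h)^{\otimes 2}$, where $\alpha$ and $\beta$ are determined entirely by $\mathrm{Tr}(A)=1$ and $\mathrm{Tr}(A^2)=p$. Decomposing the global swap as $F=F_v\otimes F_h$ reduces the expression to elementary traces such as $\mathrm{Tr}(F_v\otimes I_h^{\otimes 2}) = D_v D_h^2$ and $\mathrm{Tr}((F_v\otimes F_h)(F_v\otimes I_h^{\otimes 2})) = D_v^2 D_h$, and after simplification yields $\mathbb{E}\,\mathrm{Tr}(\mathrm{Tr}_h(\rho)^2) = 1/D_v + O(p/D_h)$. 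Since $p\le 1$, this produces the desired $O(1/D_h)$ bound uniformly in $A$. Markov's inequality then gives $\|\mathrm{Tr}_h(\rho)-I/D_v\|_2\le c/\sqrt{D_h}$ with probability at least $1-1/c^2$ for any constant $c$, and the norm-conversion factor $\sqrt{D_v}$ delivers the advertised $O(\|O_{\mathrm{obj}}\|_\infty\sqrt{D_v/D_h})$ bound after one further application of the von Neumann trace inequality.

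The main obstacle I anticipate is conceptual rather than computational: making the two cases genuinely uniform, since the Boltzmann-machine purity $p$ depends on the spectrum of $S$ and could in principle be close to $1$ when $e^{-S}/Z$ is nearly pure. The crude bound $p\le 1$ saves the day but means the result is pessimistic when $S$ is well spread out, in which case $\rho$ is already close to maximally mixed. A subsidiary issue is that the $2$-design hypothesis only supports Markov- or Chebyshev-type concentration, so sharpening the failure probability from constant to exponentially small would require either Levy's lemma (needing full Haar randomness) or access to a higher-order design, neither of which is needed for the statement as written.
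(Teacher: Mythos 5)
Your proposal is correct, and it reaches the stated bound by a route that is genuinely more self-contained than the paper's. The paper reduces to $\|{\rm Tr}_h(\rho)-I/D_v\|_1$ exactly as you do, passes to the Hilbert--Schmidt norm via the swap operator, and then invokes Theorem~2 of Popescu, Short and Winter as a black box for the estimate $\mathbb{E}\,{\rm Tr}\big[(\rho-I/D)^{\otimes 2}(F_{vv'}\otimes I)\big]\in O(D_v/D_h^2)\cdot D_v^{-1}$; you instead compute the second moment directly from the twirl identity $\mathbb{E}[\rho\otimes\rho]=\alpha I+\beta F$ with $\alpha,\beta$ fixed by ${\rm Tr}(A)=1$ and ${\rm Tr}(A^2)=p$, which recovers $\mathbb{E}\|{\rm Tr}_h(\rho)-I/D_v\|_2^2\in O(p/D_h)$ and hence the claim. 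The more substantive divergence is in the Boltzmann case: the paper decomposes $\rho=\sum_j e^{-E_j}\ket{j}\!\bra{j}/Z$, applies the pure-state bound to each eigenvector (each column of a $2$-design unitary), and concludes by convexity and interchanging expectations; you absorb both cases into the single form $\rho=VAV^\dagger$ with a fixed density operator $A$ and note that only the purity $p={\rm Tr}(A^2)\le 1$ enters. Your unification is cleaner and avoids any worry about summing per-eigenvector bounds over exponentially many terms, at the cost of being slightly pessimistic when $e^{-S}/Z$ is already well mixed --- a regime where, as you note, the conclusion is trivially true anyway. Your observation that the adjoints of a $2$-design again form a $2$-design (needed to treat $U^\dagger A U$ on the same footing as $UAU^\dagger$) is correct and is a detail the paper glosses over, since its statement and proof silently swap $U$ and $U^\dagger$. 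The concentration step via Markov on the squared $2$-norm matches the paper's, and your closing remark about the limits of $2$-design concentration mirrors the paper's later discussion of Levy's lemma.
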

\begin{proof}
Let us first examine the case of $\rho = U^\dagger \ket{0}\!\bra{0} U$.  We then have that if we take the expectation value over $U$ drawn from a $2$-design then 
\begin{align}
    &|\EX({\rm Tr}((O_{\rm obj}\otimes I )(U^\dagger \ket{0}\!\bra{0} U -I/D))| \nonumber\\
    &\le \|O_{\rm obj}\|_\infty \mathbb{E}\|{\rm Tr}_h(U^\dagger \ket{0}\!\bra{0}U -I/D )\|_1\nonumber\\
    &\le \|O_{\rm obj}\|_\infty \sqrt{D_v\mathbb{E}\|{\rm Tr}_h(U^\dagger \ket{0}\!\bra{0}U -I/D )\|_2^2}
\end{align}
Since the partial trace of a density operator is a density operator, it follows that the argument is positive definite and in turn that the result can be written as
\begin{align}
    \|O_{\rm obj}\|_\infty\sqrt{\mathbb{E}( {\rm Tr}((\rho -I/D) \otimes (\rho -I/D))(F_{vv'}\otimes I) },
\end{align}
where $F_{vv'}$ is the flip or swap operator that swaps the two visible subsystems.  

Since the result is quadratic in the probability distribution we have from the definition of a unitary $2$-design that 
\begin{align}
    &\mathbb{E}( {\rm Tr}((\rho -I/D) \otimes (\rho -I/D)) \nonumber\\
    &\qquad= \mathbb{E}_{\rm Haar}( {\rm Tr}((\rho -I/D) \otimes (\rho -I/D))
\end{align}
where $\EX_{\rm Haar}$ is the Haar expectation value.  The result then follows immediately from invoking Theorem 2 from the result of Popescu et al~\cite{popescu2006entanglement} that 
\begin{align}
    &\|O_{\rm obj}\|_\infty\sqrt{\mathbb{E}( {\rm Tr}((\rho -I/D) \otimes (\rho -I/D))(F_{vv'}\otimes I) }\nonumber\\
    &\in O\left(\|O_{\rm obj}\|_\infty \sqrt{\frac{D_v}{D_h}}\right).
\end{align}

Next let us assume that $\rho=e^{-H}/{\rm Tr}(e^{-H})$.  We have from the definition of $H$ and the previous result that 
for any eigenvector $\ket{j}$ of $H$
\begin{equation}
    \left|{\rm Tr}\left((O_{\rm obj}\otimes I)(\ket{j}\!\bra{j} - I/D))\right)\right|\in O\left(\|O_{\rm obj}\|_{\infty}\sqrt{\frac{D_v}{D_h}}\right)
\end{equation}
Since $\rho = \sum_j \ket{j}\!\bra{j} e^{-\bra{j} H \ket{j}} / {\rm Tr}(e^{-H}):= \sum_j\EX_H(\ket{j}\!\bra{j})$ the required result immediately follows by interchanging the order of the expectation values over the mixed state and over the unitary $2$-design.  These results also hold with high probability as a consequence of Markov's inequality.
\end{proof}
This shows that for both the Boltzmann machine, as well as unitary quantum networks, any observable measured on the visible layers will be indistinguishable, in expectation, to the maximally mixed state with high probability.  In other words, rather than strengthening the analogous classical model the presence of entanglement actually weakens them as the dimension of the hidden subsystem grows relative to the visible subsystem. For deep networks, we anticipate that there will be many more hidden neurons than visible neurons and hence generically entanglement is a bane not a boon for deep QNNs.

There are a number of caveats to this analysis.  First, we assume that the states in question are typical of a unitary $2$-design.  This assumption may not be appropriate if a structured ansatz is used or if the used circuits are shallow.  The next assumption is that the observable is supported on the visible system only.  The final, potential, caveat is that gradient-based optimizers may allow us to train our way out of these typical points and thereby find a way to productively leverage quantum effects.  While the first two caveats do speak to ways to escape this apparent no-go result, the ubiquity of ``entanglement induced barren plateaus'' will make the third option fail with high probability.

\section{Entanglement Induced Barren Plateaus}\label{sec:barren}
Our arguments for why gradient descent will fail to improve the quality of a training objective function due to entanglement between the visible and hidden layers follows from similar reasoning to that employed in Proposition~\ref{prop:one}.  However, the specific arguments require slightly more nuanced assumptions since we need to worry about how perturbations to the model parameters impact the resulting state.  Such assumptions are also made, for example, in the original McClean et al. work that identified Barren plateaus for unitary networks~\cite{mcclean2018barren}.  Further, while we were able to directly employ existing results from the literature of thermalization to prove Proposition~\ref{prop:one}, the necessary conditions do not hold for the gradients operator.  We state the main results below and provide an explicit proof in Appendices~\ref{app:unitary} and~\ref{app:QBM}

\subsection{Plateaus for Unitary networks}
We will first consider the case of unitary networks of the form \begin{equation}
U(\theta_1, \dots, \theta_n) := e^{-iH_n \theta_n} \dots  e^{-iH_1 \theta_1}.
\end{equation}
We consider the case where one of the parameters is shifted by a constant amount $\delta_k$ and argue about the maximum possible shift in the expectation of an observable that is supported only on the visible subsystem.  

A major challenge to analyzing what happens when shifting parameters of a unitary network is that such networks are so complicated that the impact of this perturbation is difficult to measure. An example of such an effect can be seen in the Loschmidt echo, which shows exponential sensitivity to perturbations in the parameters of complex quantum dynamics~\cite{haake1991quantum,yan2020information}.  Our solution, similar to that taken in~\cite{mcclean2018barren}, is to assume that the dynamics scrambles the states so much that almost all subsequences of the product $\prod_{j=1}^k e^{-i H_j \theta_j}$ form a unitary $2$-design.  This assumption is reasonable for a sufficiently deep random circuit~\cite{cleve2015near, haferkamp2019closing, harrow2018approximate}.  We then see that the value of the objective function is Lipshitz continuous with a constant that scales inversely with the hidden-dimension $D_h = 2^{n_h}$.  This shows that the plateau exists both for gradient descent as well as gradient-free methods~\footnote{The work of McClean et al~\cite{mcclean2018barren} can also be seen to implicitly imply barren plateaus for gradient-free methods.}. A formal statement of this intuition and the result is given below.

\begin{restatable}[Gradient in unitary networks]{theorem}{unitarygradient}
\label{thm:U_gradient}
Assume that $\rho(\theta)$ is drawn from a unitary $2$-design where $\rho(\theta)$ is generated through a unitary ansatz of the form
\begin{equation}
    \rho(\theta)=\prod_{j=1}^N e^{-iH_j \theta_j} \ket{0}\! \bra{0} \prod_{j=N}^1 e^{iH_j \theta_j}\nonumber
\end{equation}
 that acts on a Hilbert space that is the product of a hidden and visible space of dimensions $D_h$ and $D_v$ respectively.  Further, $H_k(\theta) = \prod_{j=1}^ke^{-iH_j \theta_j} H_k \prod_{j=k}^1 e^{i H_j \theta_j}$ for each $k$ obeys $\EX(H_k(\theta)\rho(\theta))=\EX(H_k(\theta))\EX(\rho(\theta))$. We then have that  
 \begin{equation}
\EX(\left|{\rm Tr}_v( O_{\rm obj} {\rm Tr}_h(\rho(\theta) )) \right|)  \nonumber
 \end{equation}
  is a Lipshitz continuous function of $\theta$ with constant $\Lambda$ obeying
  $$\Lambda \in O\left(  \|O_{\rm obj} \|_\infty \|H_k\|_\infty\sqrt{\frac{D_v}{D_h}}\right).$$
\end{restatable}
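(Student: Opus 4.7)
The plan is to adapt the strategy of Proposition~\ref{prop:one} from the observable value $f(\theta):={\rm Tr}((O_{\rm obj}\otimes I)\rho(\theta))$ to its gradient. Applying the product rule to the ansatz gives $\partial_{\theta_k}U(\theta)=-iH_k(\theta)U(\theta)$, so
\begin{equation}
\partial_{\theta_k}\rho(\theta)=-i[H_k(\theta),\rho(\theta)], \qquad \partial_{\theta_k}f(\theta)=-i\,{\rm Tr}(M\rho(\theta)),
\end{equation}
with $M:=[(O_{\rm obj}\otimes I),H_k(\theta)]$ anti-Hermitian and traceless. Writing $g(\theta):=\EX|f(\theta)|$ and using $||a|-|b||\le|a-b|$ together with the mean value theorem gives $|g(\theta)-g(\theta')|\le \EX|f(\theta)-f(\theta')|\le \|\theta-\theta'\|\,\sup_\theta\EX\|\nabla f(\theta)\|$, so it suffices to bound $\EX|\partial_{\theta_k}f(\theta)|$ for each coordinate $k$.

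The next move is Jensen's inequality, $\EX|\partial_{\theta_k}f|\le\sqrt{\EX(\partial_{\theta_k}f)^2}$, which passes to a second moment that is quadratic in $\rho(\theta)$ and hence exactly computable from the $2$-design. By the SWAP identity, $(\partial_{\theta_k}f)^2=-{\rm Tr}\!\left((M\otimes M)(\rho\otimes\rho)\right)$. I would interpret the factorization assumption $\EX(H_k(\theta)\rho(\theta))=\EX(H_k(\theta))\EX(\rho(\theta))$ as saying that the prefix $V_k:=\prod_{j=1}^k e^{-iH_j\theta_j}$, on which $M$ depends, is independent of the suffix $V_k':=\prod_{j=N}^{k+1}e^{-iH_j\theta_j}$, which together with $V_k$ determines $\rho$. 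Conditioning on $V_k$ and averaging the pure-state block $\rho\otimes\rho$ against the Haar/$2$-design identity $\EX_{V_k'}(\rho\otimes\rho)=(I+F)/(D(D+1))$, and using ${\rm Tr}(M)=0$ with ${\rm Tr}(M^2)=-\|M\|_F^2$, collapses this to
\begin{equation}
\EX(\partial_{\theta_k}f)^2=\frac{\EX_{V_k}\|M\|_F^2}{D(D+1)}.
\end{equation}

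Finally I would bound $\|M\|_F^2$ using the commutator estimate $\|[A,B]\|_F\le 2\|A\|_\infty\|B\|_F$ combined with the unitary invariance $\|H_k(\theta)\|_F=\|H_k\|_F\le\sqrt{D}\|H_k\|_\infty$ to obtain $\|M\|_F^2\le 4D\|O_{\rm obj}\|_\infty^2\|H_k\|_\infty^2$ pointwise in $V_k$. Substituting back yields $\EX(\partial_{\theta_k}f)^2\in O(\|O_{\rm obj}\|_\infty^2\|H_k\|_\infty^2/D)$, and since $1/\sqrt{D}=1/\sqrt{D_vD_h}\le\sqrt{D_v/D_h}$ whenever $D_v\ge 1$, taking the square root gives the stated bound $\Lambda\in O(\|O_{\rm obj}\|_\infty\|H_k\|_\infty\sqrt{D_v/D_h})$.

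The hardest step is the conditional second-moment computation: the stated assumption controls only the first moment of $H_k(\theta)\rho(\theta)$, whereas the identity $\EX_{V_k'}(\rho\otimes\rho)=(I+F)/(D(D+1))$ requires the suffix to behave like a $2$-design independent of the prefix. Justifying this within the theorem's assumptions is the main technical point, which I would resolve by invoking the standard scrambling / random-circuit-depth results underlying the barren-plateau literature~\cite{mcclean2018barren,harrow2009random,cleve2015near} so that the resulting approximation error is absorbed into the big-$O$ constant.
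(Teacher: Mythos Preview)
Your argument is correct and reaches the claimed bound, but by a genuinely different route than the paper. The paper expands $\rho(\theta+\delta_k)-\rho(\theta)$ via Hadamard's lemma as the full commutator series $\sum_{q\ge 1}\tfrac{|\delta_k|^q}{q!}\,{\rm Ad}^q_{-iH_k(\theta)}(\rho(\theta))$ and, for each $q$, bounds $\EX\|{\rm Tr}_h({\rm Ad}^q(\rho))\|_2^2$ using the swap trick applied to the \emph{partial} traces. The point of that longer calculation is that, after replacing $\EX_\phi(\rho\otimes\rho)$ by $2\Pi^{\rm sym}/(D(D+1))$, the $2^{2q-1}$ positive and $2^{2q-1}$ negative monomials in the expansion of $({\rm Tr}_h{\rm Ad}^q(\rho))^2$ share a common leading piece that cancels exactly; only the residual $F_{hh'}$-term survives, with size $\|H_k\|_\infty^{2q}/D_h$. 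Summing over $q$ then produces the Lipschitz bound, with the $\sqrt{D_v/D_h}$ factor arising specifically from the hidden partial trace.

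Your route is more economical: you differentiate once, land directly on the McClean--et--al.\ style second-moment identity $\EX(\partial_{\theta_k}f)^2=\|M\|_F^2/(D(D+1))$, and bound $\|M\|_F^2\le 4D\,\|O_{\rm obj}\|_\infty^2\|H_k\|_\infty^2$. This in fact yields the \emph{stronger} estimate $\Lambda\in O\!\left(\|O_{\rm obj}\|_\infty\|H_k\|_\infty/\sqrt{D_vD_h}\right)$, which you then relax to $\sqrt{D_v/D_h}$. The trade-off is conceptual rather than technical: your bound is the standard concentration-of-measure barren plateau over the full $D$-dimensional space and does not single out the hidden/visible bipartition, whereas the paper's argument makes the $1/D_h$ dependence emerge from the partial-trace structure itself---the entanglement mechanism the paper wants to emphasize. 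As a proof of the theorem \emph{as stated}, though, your argument is valid and shorter. Your caveat about the independence hypothesis is also well placed: the paper's proof likewise passes from the stated first-moment factorization to ``$H_k(\theta)$ and $\rho(\theta)$ uncorrelated'' and treats $\EX_\phi$ separately from $\EX_H$, i.e.\ it too relies on second-moment independence of prefix and suffix rather than only the literal hypothesis.
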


The proof of the theorem follows by using the unitary invariance of the trace norm and Hadamard's lemma to rewrite the difference between the perturbed exponential and the original exponential as a commutator series of $H_k(\theta)$ and $\rho(\theta)$.  Then by using the triangle inequality, the Cauchy-Schwarz inequality as well as the independence assumptions made above to arrive at the result.  An explicit proof is given in Appendix~\ref{app:unitary}.

\subsection{Plateaus for Boltzmann Machines}
Next, we will turn our attention to Boltzmann machines. We show that parameterized Hamiltonians drawn from a unitary ensemble also experience an entanglement induced barren plateau.  The nature of this plateau, however, differs from that of the unitary network's plateau in that the plateau occurs under reasonable assumptions if ${\rm Tr}(h_h)^2/{\rm Tr}(h_h^2) \in o(D_h)$ as we see below.

\begin{restatable}[Gradient for Boltzmann machines]{theorem}{BMgradient}
\label{thm:BM_gradient}
Assume $H\in \mathbb{C}^{D\times D}$ is a random Hermitian matrix 
drawn from an ensemble in the following manner: a diagonal matrix with eigenvalues $E_j\in\mathbb{R}$ chosen according to a probability ${\rm Pr}(E_j)$ such that $\max_k\mathbb{E}(\frac{1}{D^2}(\sum_{j\ne k}(E_j -E_k)^{-1})^2) \in {O}(\Gamma^2 )$ and then is conjugated with a unitary drawn from a distribution that is a  unitary $2$-design.  Let us then define for fixed Hermitian $H_k\in \mathbb{C}^{D\times D}$ that can be written for Hermitian $h_v,h_h$ as $H_k = h_v\otimes h_h$ and
\begin{equation}
  \rho(\theta_k):= e^{-(H+\theta_k H_k)}/{\rm Tr}(e^{-(H+\theta H_\theta)}).\nonumber  
\end{equation}
Finally, let $O_{\rm obj}\in \mathbb{C}^{D_v\times D_v}$ be a Hermitian matrix then 
\begin{equation}
\kappa:={\rm Tr}((O_{\rm obj}\otimes I_h) (\rho(\theta_k)))\nonumber
\end{equation}
 is  a differentiable function that obeys
\begin{equation}
\left|\frac{\partial \kappa}{\partial_{\theta_k}}\right|_{\theta_k=0} \!\!\!\!\! \in {O}\left( \|O_{\rm obj}\|_\infty\Gamma\|H_k\|_\infty \sqrt{\frac{D_v}{D_h}\left(\frac{{\rm Tr}(h_h)^2}{D_v{\rm Tr}(h_h^2)} + 1\right)}\right), \nonumber
\end{equation}
with high probability over the ensemble.
\end{restatable}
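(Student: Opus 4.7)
The plan is to differentiate the Gibbs state via Duhamel's identity, expand in the eigenbasis of $H$, and then reduce the resulting quartic-in-$U$ expression to second-moment computations that a $2$-design can evaluate. First, using $\partial_{\theta_k}e^{-H(\theta)}|_{\theta_k=0}=-\int_0^1 e^{-sH}H_k e^{-(1-s)H}\,ds$ together with $\partial_{\theta_k}Z|_{\theta_k=0}=-\mathrm{Tr}(H_k e^{-H})$, the quotient rule gives
\begin{align*}
\partial_{\theta_k}\kappa\big|_{\theta_k=0}&=-\sum_{j,l}(\tilde H_k)_{jl}(\tilde O)_{lj}\,a_{jl}\\
&\quad+\langle H_k\rangle_\rho\langle O_{\rm obj}\otimes I\rangle_\rho,
\end{align*}
where $\tilde H_k=UH_kU^\dagger$ and $\tilde O=U(O_{\rm obj}\otimes I)U^\dagger$ in the eigenbasis $|\phi_j\rangle=U^\dagger|j\rangle$, $p_j=e^{-E_j}/Z$, and the Kubo--Mori weights are $a_{jl}=(p_l-p_j)/(E_j-E_l)$ off-diagonal and $a_{jj}=p_j$ by continuity. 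This representation exposes the inverse-gap factors that are exactly what the hypothesis on $\Gamma$ is designed to control.

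Next, I would split the double sum into off-diagonal and diagonal pieces. For the off-diagonal part, using $|a_{jl}|\le|E_j-E_l|^{-1}$ (since $|p_l-p_j|\le 1$), Cauchy--Schwarz in Frobenius form gives
\[
\Big|\!\sum_{j\ne l}(\tilde H_k)_{jl}(\tilde O)_{lj}a_{jl}\Big|\le \|\tilde O\|_F\Big(\!\sum_{j\ne l}\frac{|(\tilde H_k)_{jl}|^2}{(E_j-E_l)^2}\Big)^{1/2},
\]
and $\|\tilde O\|_F=\|O_{\rm obj}\otimes I\|_F\le\sqrt{D_vD_h}\|O_{\rm obj}\|_\infty$ by unitary invariance. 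Averaging over the $2$-design via Weingarten gives $\mathbb E_U|(\tilde H_k)_{jl}|^2=(\|H_k\|_F^2-|\mathrm{Tr}(H_k)|^2/D)/(D^2-1)$ for $j\ne l$, which is a genuine $2$-design quantity. Combined with the spectral hypothesis (which controls the averaged spectral sum $\sum_{l\ne j}(E_l-E_j)^{-2}$ by $O(D^2\Gamma^2)$) and Jensen's inequality to pass expectations under the square root, one obtains an off-diagonal contribution bounded by $O(\|O_{\rm obj}\|_\infty\Gamma\|H_k\|_\infty\sqrt{D_v/D_h})$. The diagonal terms, combined with the normalization correction $\langle H_k\rangle_\rho\langle O\rangle_\rho$, form a $p$-weighted covariance of diagonal matrix elements whose $2$-design variance carries an extra $\mathrm{Tr}(h_h)^2\mathrm{Tr}(h_v)^2/D$ piece from the Weingarten subtraction; inside the square root this combines with the leading off-diagonal $\mathrm{Tr}(h_h^2)$ term to yield exactly the stated $\mathrm{Tr}(h_h)^2/(D_v\mathrm{Tr}(h_h^2))+1$. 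Markov's inequality then upgrades the expectation bound to the high-probability claim.

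The principal obstacle is that $\partial_{\theta_k}\kappa$ has total degree four in the matrix elements of $U$ and $U^\dagger$ whereas only a $2$-design is assumed; applying Cauchy--Schwarz before averaging is the device that lets a $2$-design suffice and is also the source of the \emph{square-root} dependence on $D_v/D_h$ rather than $D_v/D_h$ itself. A secondary subtlety is that the sharp factor $\mathrm{Tr}(h_h)^2/(D_v\mathrm{Tr}(h_h^2))+1$ only emerges if the Weingarten correction $|\mathrm{Tr}(A)|^2/D$ is tracked through both the off-diagonal averaging and the covariance analysis of the diagonal part; this is what makes the bound trivialise in the degenerate case $h_h\propto I$, as it should since the hidden subsystem then decouples from $H_k$, while being sharp in the traceless case that is assumed throughout the paper.
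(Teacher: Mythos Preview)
Your Duhamel/Kubo--Mori representation of $\partial_{\theta_k}\kappa\big|_{0}$ is correct and is in fact equivalent to the first-order eigenvector and eigenvalue perturbation theory the paper uses. The gap is in the bounding step. The Cauchy--Schwarz estimate
\[
\Big|\sum_{j\ne l}(\tilde H_k)_{jl}(\tilde O)_{lj}\,a_{jl}\Big|\le \|\tilde O\|_F\Big(\sum_{j\ne l}\frac{|(\tilde H_k)_{jl}|^2}{(E_j-E_l)^2}\Big)^{1/2}
\]
is far too crude to produce the claimed $\sqrt{D_v/D_h}$ scaling. Note that $\|\tilde O\|_F=\|O_{\rm obj}\otimes I_h\|_F=\sqrt{D_h}\,\|O_{\rm obj}\|_F\le\sqrt{D_vD_h}\,\|O_{\rm obj}\|_\infty$ \emph{grows} with $D_h$, while the target \emph{decays} as $D_h^{-1/2}$. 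Carrying your own numbers through---$\mathbb E_U|(\tilde H_k)_{jl}|^2\sim\|H_k\|_F^2/D^2\le\|H_k\|_\infty^2/D$ and the spectral sum you quote $\sim D^3\Gamma^2$ after summing over $j$---gives an off-diagonal contribution of order $D^{3/2}\Gamma\|O_{\rm obj}\|_\infty\|H_k\|_\infty$, off from the target by a factor $D_vD_h^{2}$. The point is that passing to $\|\tilde O\|_F$ discards exactly the tensor structure $O_{\rm obj}\otimes I_h$ that makes the bound small. The paper instead uses
\[
|{\rm Tr}((O_{\rm obj}\otimes I_h)\,\partial_{\theta_k}\rho)|\le\|O_{\rm obj}\|_\infty\sqrt{D_v}\,\big(\mathbb E\|{\rm Tr}_h(\partial_{\theta_k}\rho)\|_2^2\big)^{1/2}
\]
and evaluates $\mathbb E\|{\rm Tr}_h(\cdot)\|_2^2$ via the swap $F_{vv'}\otimes I_{hh'}$ on two copies; the identity on the hidden factors is precisely what produces the extra $1/D_h$.

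There is a second, independent problem: the spectral hypothesis bounds $\mathbb E\big[(\sum_{j\ne k}(E_j-E_k)^{-1})^2\big]$, the \emph{square of the sum}, not $\sum_{j\ne k}(E_j-E_k)^{-2}$, the \emph{sum of the squares}. Your crude bound $|a_{jl}|\le|E_j-E_l|^{-1}$ forces the latter, which is not controlled by the former since the cross-terms $(E_j-E_k)^{-1}(E_{j'}-E_k)^{-1}$ have both signs. The paper's eigenvector-perturbation route naturally produces the double sum $\sum_{j,j'\ne n}[(E_n-E_j)(E_n-E_{j'})]^{-1}=\big(\sum_{j\ne n}(E_n-E_j)^{-1}\big)^2$, which matches the hypothesis as stated. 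Your Duhamel starting point is fine, but to close the argument you must retain the partial-trace structure through to the $2$-norm computation rather than collapse it with a Frobenius Cauchy--Schwarz on the full space.
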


The proof of Theorem~\ref{thm:BM_gradient} can be found in The Appendix~\ref{app:QBM}.  The sketch of the proof is relatively simple.  We use the assumption that the eigenvectors are taken to be columns of matrices drawn from a unitary $2$-design and then use perturbation theory to argue about the perturbed $H$. The use of perturbation theory introduces the parameter $\Gamma$ that characterizes the inverse minimal gap. We then take the partial trace of the resulting perturbed eigenvectors to show that if the reduced density matrix over the hidden units of the perturbation Hamiltonian $H_k$ has zero trace then the partial trace over the hidden layers of each eigenvector remains the maximally mixed state as per Proposition~\ref{prop:one}.  This partial trace assumption is needed because if bias terms were added to the hidden units then one could disentangle them from the visible units in the ground state through the perturbation.  While such a perturbation may save the predictive power of the Boltzmann machine, it would effectively eliminate the hidden layers reverting the model to a shallow one.  With these observations, the results then follow from the use of standard inequalities and the Haar expectation value of random states given, for example, in~\cite{babbush2015chemical}. The result holds with high probability as a consequence of the Markov inequality.

In particular, we find that the gradient of the objective function with respect to terms that non-trivially act on the hidden layers are exponentially small in the number of hidden qubits since without loss of generality we may take ${\rm Tr}(h_h)=0$ for all such terms.  In contrast, the gradient with respect to the visible Hamiltonian coefficients need not be exponentially small in the number of hidden qubits.  Indeed, if we have a $k$-local random Hamiltonian where each Hamiltonian coefficient is chosen independently from a distribution that is independent of $D$ then $\Gamma \in O(\log(D)^{1-k})$ thus for any $k\ge 2$ the gradient may only be polynomially small.

A side effect of these observations is that they explain, in part, the observations in~\cite{kieferova2017tomography} that the number of hidden units included in the model did not increase the performance of Quantum Boltzmann Machines.  This can now be understood from the fact that the Gibbs states for typical Hamiltonians generate thermal states that are close to the maximally mixed state. Thus, the inclusion of hidden units typically will not be expected to increase the performance of quantum Boltzmann machines.

\section{Haar Random Unitaries}
In the previous sections, we assumed that the eigenbasis the neural networks scramble at least as effectively as a unitary $2$-design.  However, if we assume that in the case of the unitary networks the gate sequence is Haar--random or in the case of the Boltzmann machine that the basis is Haar--random, then the type of concentration that we see can be radically improved.  Specifically, Levy's lemma~\cite{popescu2006entanglement}  can be used in place of Markov's inequality to show that the vast majority of randomly selected networks will have vanishing gradients.  In particular,
\begin{lemma}[Levy]
Given a function $f:\mathbb{S}^d \mapsto \mathbb{R}$ defined on the $d$-dimensional hypersphere $\mathbb{S}^d$ and a point $\phi\in\mathbb{S}^d$ chosen uniformly at random,
$$
{\rm Prob}[|f(\phi) - \mathbb{E}(f)| \ge \epsilon] \le 2 \exp\left(\frac{-C(d+1)\epsilon^2}{\eta^2} \right),
$$
where $\eta$ is the Lipshitz constant of $f$ and $C\in \Theta(1)$.
\end{lemma}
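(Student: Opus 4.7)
The plan is to derive Levy's lemma from the spherical isoperimetric inequality together with a median-to-mean bound, which is the standard route to concentration on $\mathbb{S}^d$. By homogeneity, it suffices to prove the bound for a $1$-Lipschitz function and then rescale: replacing $f$ by $f/\eta$ turns an $\eta$-Lipschitz function into a $1$-Lipschitz function and multiplies the deviation threshold by $1/\eta$, which is exactly how $\eta$ enters the stated exponent. So I would fix $f$ to be $1$-Lipschitz (with respect to the geodesic distance; the Euclidean Lipschitz hypothesis is only stronger since geodesic distance dominates chordal distance on the sphere).

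Next I would introduce a median $M$ of $f$ under the uniform (Haar) measure $\mu$ on $\mathbb{S}^d$ and consider the sublevel set $A = \{\phi : f(\phi) \le M\}$, which by definition has $\mu(A) \ge 1/2$. The spherical isoperimetric inequality of Lévy--Schmidt states that among all measurable sets of a given measure, geodesic balls (spherical caps) minimize the measure of their $r$-neighborhoods. A standard consequence, which I would quote rather than reprove, is that any $A$ with $\mu(A)\ge 1/2$ has
\begin{equation}
\mu(A_r) \ge 1 - \exp\!\bigl(-C(d+1)r^2\bigr),
\end{equation}
for an absolute constant $C$, where $A_r$ is the geodesic $r$-neighborhood. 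Because $f$ is $1$-Lipschitz, $\phi \in A_r$ implies $f(\phi) \le M + r$. Applying the isoperimetric bound with $r=\epsilon$ gives $\mu(\{f \le M+\epsilon\}) \ge 1 - e^{-C(d+1)\epsilon^2}$, and the symmetric argument using $\{f \ge M\}$ gives the matching lower-tail bound, so $\mu(|f-M|\ge\epsilon) \le 2 e^{-C(d+1)\epsilon^2}$.

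Finally, I would replace the median by the mean. Integrating the concentration tail yields $|\mathbb{E}(f) - M| \le C'/\sqrt{d+1}$ for some absolute $C'$, so that on the event $|f-\mathbb{E}(f)|\ge \epsilon$ one has $|f-M|\ge \epsilon - C'/\sqrt{d+1}$. For $\epsilon$ above a small multiple of $1/\sqrt{d+1}$ this only adjusts the constant $C$ in the exponent, and for smaller $\epsilon$ the stated bound is trivial since the probability is bounded by $2$. Reinstating the Lipschitz constant by the initial rescaling replaces $\epsilon$ with $\epsilon/\eta$ and produces exactly the advertised inequality with $\eta^2$ in the denominator of the exponent.

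The main obstacle is the spherical isoperimetric inequality itself, which is a nontrivial geometric fact; a self-contained alternative is to establish the equivalent Gaussian-type concentration via the log-Sobolev inequality on $\mathbb{S}^d$ (with optimal constant $\Theta(d)$) and then run Herbst's argument on the Laplace transform of $f-\mathbb{E}(f)$. Either route is standard, and the remaining steps (median/mean reduction, Lipschitz rescaling) are routine, so the substance of the proof is really the concentration-of-measure input rather than any machine-learning-specific argument.
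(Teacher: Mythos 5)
The paper does not prove this lemma at all---it is quoted as a known result with a citation to Popescu, Short and Winter---so there is no in-paper argument to compare against. Your proof is the standard and correct derivation (spherical isoperimetry applied to the sublevel set of a median, Lipschitz rescaling, and a median-to-mean correction absorbed into the constant $C$), and the details you give, including the handling of small $\epsilon$ where the bound is vacuous, are all sound.
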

This result ends up allowing us to use an even tighter concentration result for the systems than what is possible using Markov's inequality because it shows that a large deviation from the Haar expectation is exponentially small.  This further means that a substantial deviation from the results stated above is in fact exponentially smaller than what would be expected if we only had a $2$-design condition.  If unitary $k$-designs are used in place of $2$-designs then it should be noted that it is possible to interpolate between these two results~\cite{low2009large}, however, the bounds that arise from using this result under the assumption that we only have a $2$-design is not superior to our Markov-based analysis.

\section{Numerical Results}

We ran a series of numerical experiments summarized in Figure \ref{grad_numerics} and \ref{td_mms}  to demonstrate that our asymptotic results apply to small-sized quantum networks. 

We constructed our ansatz using the terms of a random two-local Hamiltonian model on $n$-qubits. Let $\sigma_a^j=I^{\otimes j-1}\otimes \sigma_a \otimes I^{\otimes n-j}$ for $a\in \{x,y,z\}$ and define 
\begin{equation}
   \hat{H} = \sum_i\sum_a J_a^{i} \sigma_a^i + \sum_{i<j} \sum_a \sum_b J_{a, b}^{i, j} \sigma_a^i \sigma_b^j \label{two-local_ham}
\end{equation} 
where we refer to $J_a^{i}$ as the {\it onsite} coefficients and $J_{a, b}^{i, j}$ as the {\it offsite} coefficients of our model. For the Unitary model, we exhaustively sampled from the individual terms in equations \eqref{two-local_ham} to construct the individual unitaries. For the Boltzmann model, we used $\hat{H}$ as our Hamiltonian.

 In Figure \ref{td_bound} and Figure \ref{td_mms}, we compared the trace distance scaling of the maximally mixed state and three models: the gaussian unitary ensemble model, the unitary QNN, and the Quantum Boltzmann Machine. In Figure \ref{td_bound}, we see that for an increasing number of hidden units these models will produce states close to the maximally mixed state. This result can be understood in the context of Section~\ref{sec:volume-law}.
 Figure \ref{td_mms} highlights this effect on the data histograms: as we increase the number of hidden units, we see the trace distance concentrating around zero. 
 
In Figure \ref{grad_tro}, we performed a similar analysis on the gradients of the unitary QNN. We observed an overall decrease in $\infty$-norm of the gradient vector as we increasing the size of the hidden units. We also calculated the exponential rate of decay using least square fitting. This overall decay is predicted in Theorem \ref{thm:U_gradient} as we increase the number of hidden units.

The Boltzmann Machine results are summarized in Figure \ref{grad_bm}. In order to observe gradient decay in our experiments, we need to amplify the effect of the offsite terms in relation to the onsite terms to encourage a volume-law scaling.  The emergence of these volume laws can be understood from perturbation theory since the leading order shift in an eigenvector $\ket{n}$ with eigenvalue $E_n$ is proportional to $\sum_{j\ne n} \ket{j}\!\bra{j}H_k \ket{n}/(E_n -E_j)$.  This shows that if we take $\ket{n}$ to be an eigenstate of the $1$-body terms in the Hamiltonian then the entanglement generated by $H_k$ is suppressed by the energy gaps between these states.  We, therefore, choose these magnitudes to be small so that significant entanglement can be introduced in the eigenstates despite the small values of $D$ that can be explored on a classical computer. This phenomenon is predicted in Theorem~\ref{thm:BM_gradient}.

\begin{figure*}[htb]
  \centering
   \subfloat[
    ]{
    \includegraphics[width=0.33\textwidth]{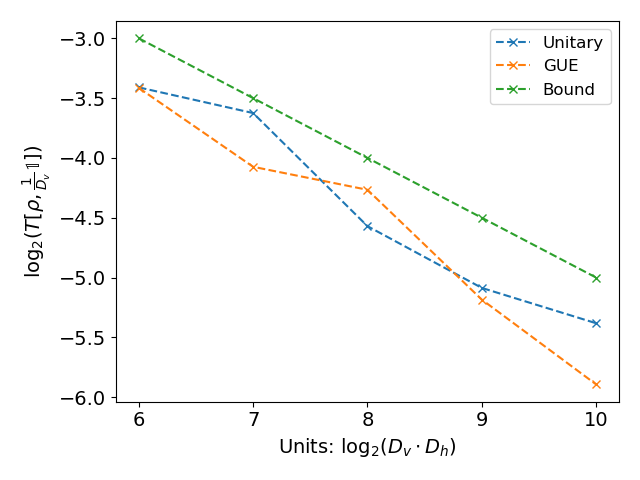}\label{td_bound}
  }
  \subfloat[]{
    \includegraphics[width=0.33\textwidth]{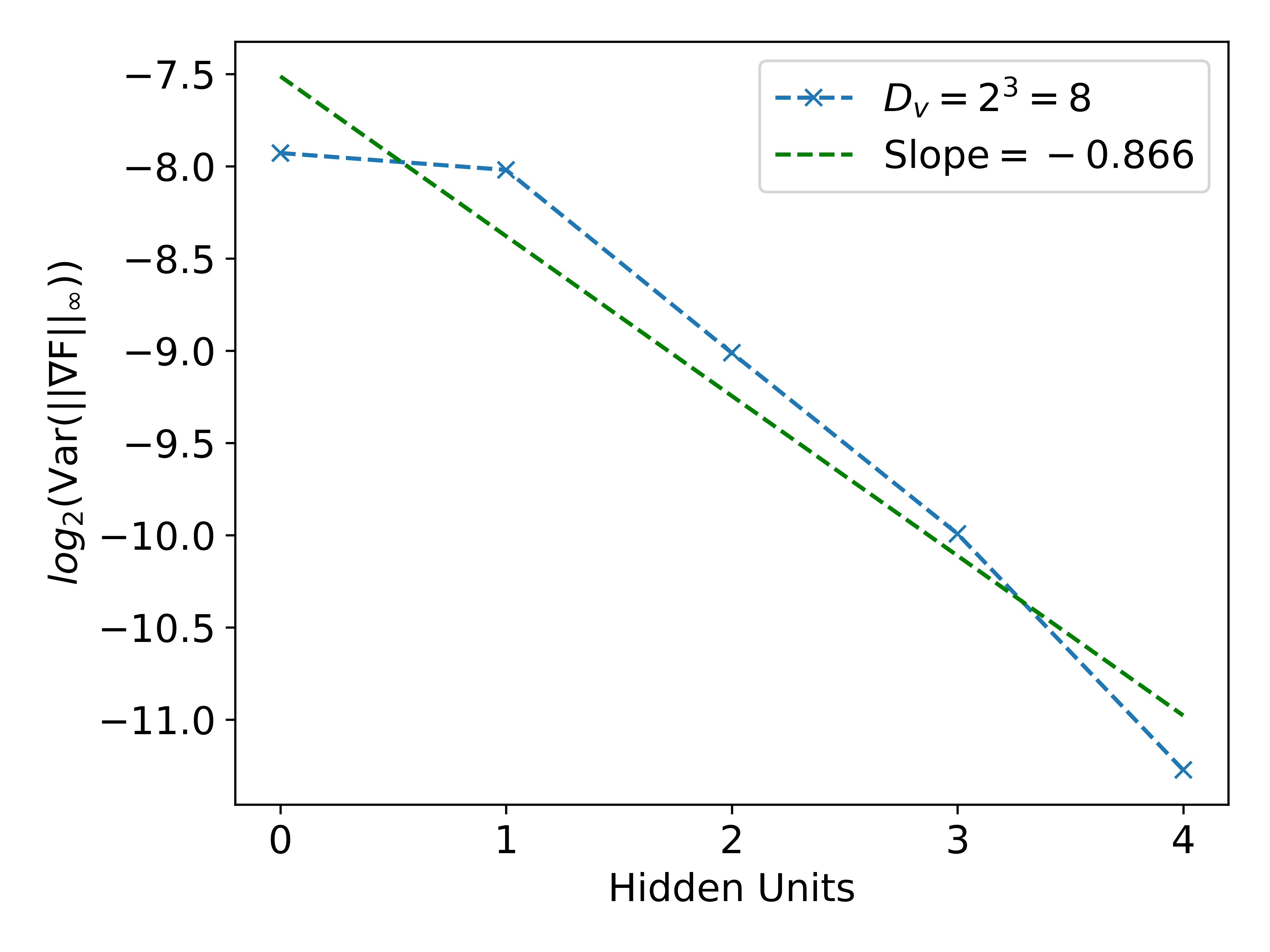}\label{grad_tro}
  }
  \subfloat[]{
   \includegraphics[width=0.33\textwidth]{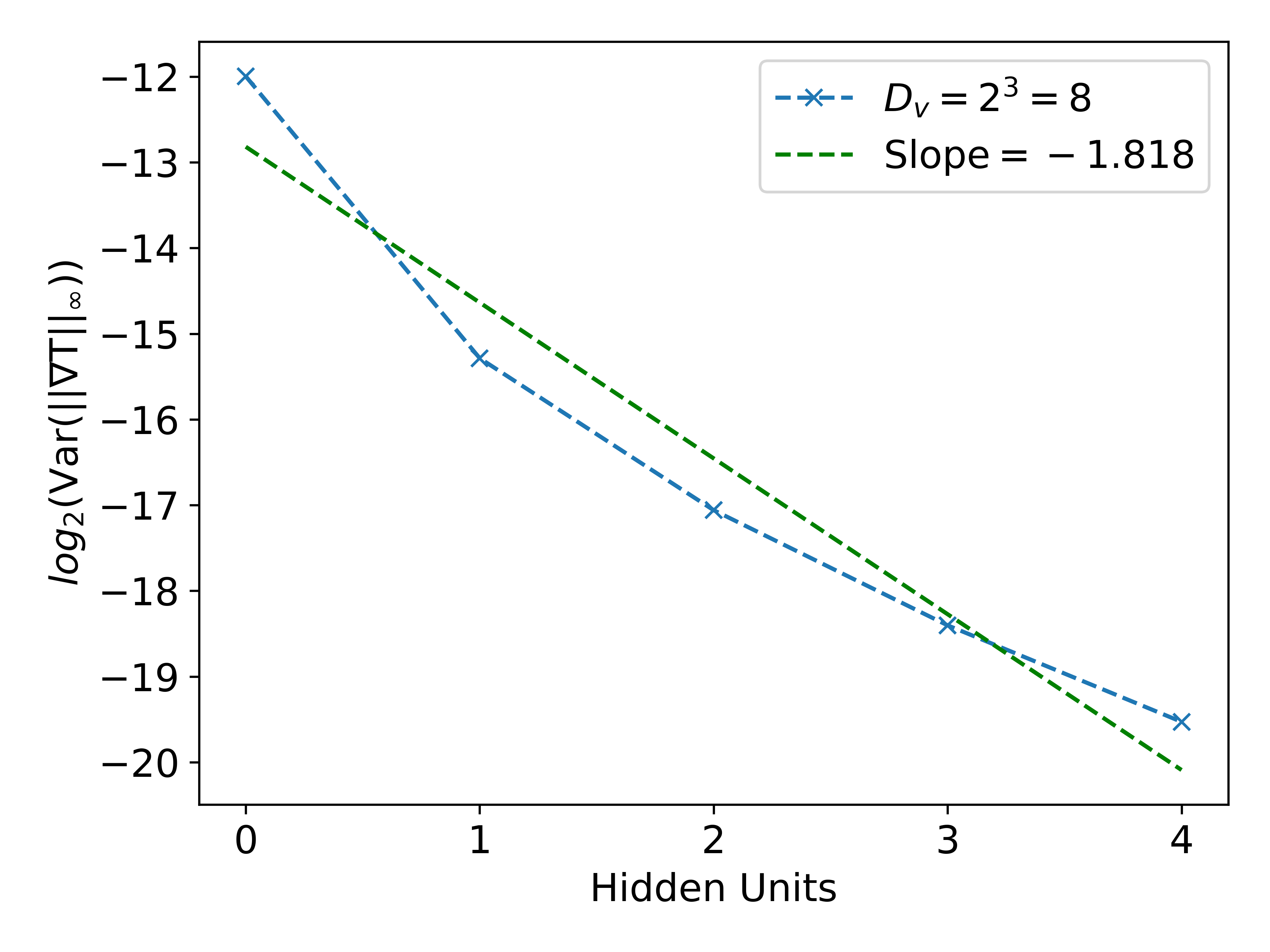}\label{grad_bm}
   }
  \caption[]{
  (a) Log-Log plot showing the of trace distance data in relation to the bound. The blue and orange marked values correspond to the estimated maximum peak of the data histograms where $D_v=2^1=2$. The green marked values correspond to the bounds we obtain after substituting in \(\mathbb{E}[T(\rho, I/D)]\leq 1/2\sqrt{D_v/D_h}\).   (b-c) Semi-log plot highlighting the decay in the variance of the $\infty$-norm of the gradient vector over an ensemble of initialized models. The dashed blue represents the average of $1000$ model instances. The dash green line represents is the best fit obtained from least squares.  (b) Gradient estimates for the Unitary Model. (c) Gradient estimates for the normalized Quantum Boltzmann Machine.
  }
  \label{grad_numerics}
\end{figure*}

In Figure \ref{grad_tro}, we generated a fixed thermal state $e^{-\hat{H}}/Z$ with onsite coefficients drawn from a normal distribution with mean 0, variance 0.01, i.e. $\mathcal{N}(0,0.01)$ and $\mathcal{N}(0,1)$ for the offsite coefficients. We then proceeded to estimate the gradient vector of the Fidelity, $F$, between our model and this fixed state using finite differences. We generated $1000$ instances of our Unitary model by initializing all the coefficients with samples from $\mathcal{N}(0,1)$. The figure shows a decrease in the variance $\infty$-norm of the gradient vector on a semilog scale.

In Figure \ref{grad_bm}, we estimated the gradient vector of the trace distance, $T$, between our model and its perturbation for each parameter using finite differences. The onsite coefficients where drawn from $\mathcal{N}(0,0.01)$ and the offsite coefficients from $\mathcal{N}(0,1)$. Moreover, normalized the Hamiltonian by its operator norm.

\begin{figure*}[!htb]
  \centering%
  \subfloat[Real-Time Gaussian Unitary ensemble]{%
    \includegraphics[width=0.33\textwidth]{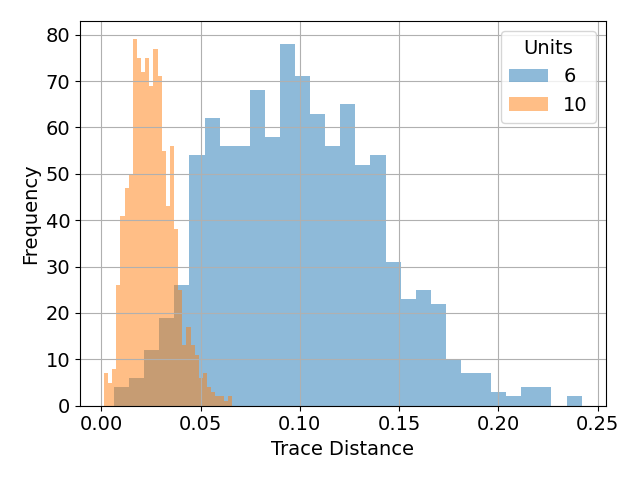}%
  }
  \subfloat[
    Unitary Model
    ]{\includegraphics[width=0.33\textwidth]{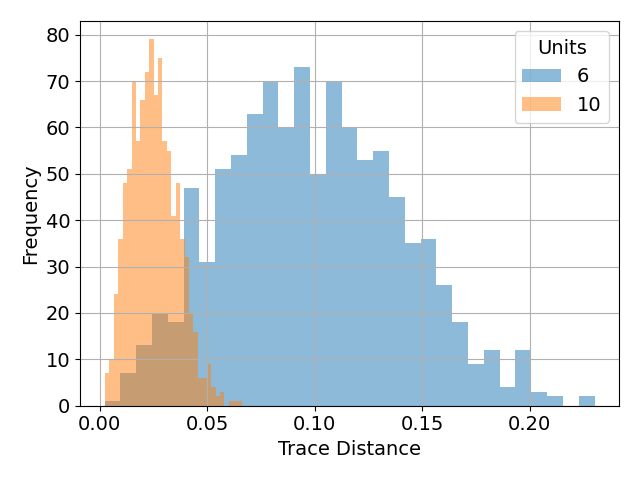}%
  }
   \subfloat[Normalized Boltzmann Machine]{%
    \includegraphics[width=0.33\textwidth]{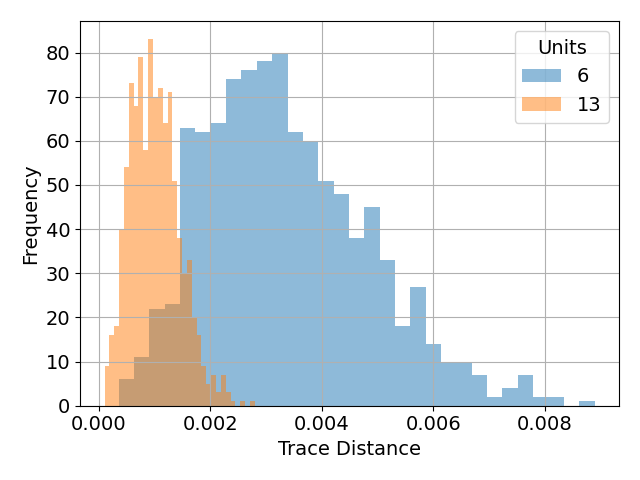}%
  }
 
  \caption[]{Computed the trace distance between the reduce density of our models and the maximally mixed state for 1000 instances. The models considered have only one visible unit i.e. $D_v=2^1=2$.  (a) Empirical trace distance distribution of a real-time evolution ($t=10$) of Hamiltonians drawn from the Gaussian Unitary Ensemble (GUE). (b) Empirical trace distance distribution of the unitary model. All coefficients are drawn from a uniform distribution over $[0, 1)$. (c) Empirical trace distance distribution of the quantum Boltzmann machine. The on-set coefficients, $J_a^{i}$,  are drawn $\mathcal{N}(0,0.01)$. The off-set coefficients, $J_{a, b}^{i, j}$, are drawn from $\mathcal{N}(0,1)$.  Moreover, the Hamiltonian is normalized by its operator norm. }
  \label{td_mms}
\end{figure*}

\section{Conclusion}
We showed that for Haar-random pure states and thermal states of random Hamiltonians, the gradient of an observable objective function will be vanishing exponentially with the number of hidden units. This shows that common types of QNNs are not only generically difficult to train via local optimization methods but also that adding hidden units will not always increase the power of QNNs.  Indeed, asymptotically we see that unless the states generated satisfy an area law such hidden neurons will likely be harmful.

One can prevent these entanglement induced barren plateaus by violating any of the assumptions in our proofs. The first is to choose an atypical initial state which has been already explored in~\cite{grant2019initialization}. Next, one could try to depart from the use of gradient-based optimization to train such quantum models. However, it is unlikely that without knowledge of the global properties of the training objective function that such methods would succeed in light of Proposition~\ref{prop:one}. Lastly, one can train models using an objective function that does not correspond to an observable and is independent of the density operator.

Of the three approaches, it is this last approach that we advocate greater attention be paid to in quantum machine learning.  One tactic that can be used to circumvent our pessimistic results is to begin a discriminative learning task by first training generatively according to a quantity such as the quantum relative entropy~\cite{kieferova2017tomography,wiebe2019generative} which is non-linear in the quantum state $\rho$.  We will show in subsequent work that this quantum generative pre-training approach can be used to successfully train both Boltzmann machines and unitary networks and thereby mitigate some of the challenges identified here for training deep QNNs.

As a final point, it is important to recognize that while entanglement is a powerful tool to add to our models, it must be used like a scalpel and not a sledgehammer.  Quantum properties such as entanglement may be harmful if not surgically deployed and judicially used.  Understanding the role that such quantum effects have on a model is very likely necessary~\cite{wiebe2020key} if we are to build quantum models that can successfully leverage quantum effects.

\acknowledgments
We thank Michael Bremner, Jarrod McClean and Alessandro Rogero for helpful discussions and feedback.
M\'aria Kieferov\'a acknowledges funding from ARC Centre of Excellence for Quantum Computation and Communication Technology
(CQC2T), project number CE170100012. Support for C. Ortiz Marrero and Nathan Wiebe for the numerical studies was provided by the Laboratory Directed Research and Development Program at Pacific Northwest National Laboratory, a multi-program national laboratory operated by Battelle for the U.S. Department of Energy, Release No. PNNL-SA-157287 and the theoretical work on this project by NW was supported by the U.S. Department of Energy, Office of Science, National Quantum Information Science Research Centers, Co-Design Center for Quantum Advantage under contract number DE-SC0012704. Additional logistical support for Nathan Wiebe was provided by the Google Research Award.

\bibliographystyle{unsrt}  
\bibliography{references}  

\onecolumngrid

\appendix


\section{Proof of Unitary Network Gradient}\label{app:unitary}
Here we provided a complete proof of Theorem \ref{thm:U_gradient}.

\unitarygradient*

\begin{proof}
First using the definition that \begin{align}
    \rho(\theta+\delta_k) :=& \left(\prod_{j=1}^k e^{-iH_j\theta_j}e^{-iH_k\delta_k} \prod_{j>k}e^{-i H_j \theta_j}\right) \rho_0 
    \times\left(\prod_{j=1}^k e^{-iH_j\theta_j}e^{-iH_k\delta_k} \prod_{j>k}e^{-i H_j \theta_j}\right)^\dagger
\end{align}

we then wish to analyze the distribution over $\theta$ of 
$
    \left|{\rm Tr}_v( O_{\rm obj} {\rm Tr}_h(\rho(\theta) - \rho(\theta+\delta_k))) \right|,
$
under the assumption that the unitaries satisfy a $2$-design condition.

Using Hadamard's Lemma we can express the distance between the difference between the expectation values is

\begin{align}
    &{\rm Tr}_v\left(O_{\rm obj}{\rm Tr}_h\left( \rho(\theta) - \left(\prod_{j=1}^ke^{-iH_j \theta_j} e^{-iH_k |\delta_k|} \prod_{j=k}^1 e^{i H_j \theta_j}\right) \rho(\theta)\left(\prod_{j=1}^ke^{-iH_j \theta_j} e^{-iH_k |\delta_k|} \prod_{j=k}^1 e^{i H_j \theta_j}\right)^\dagger \right) \right)\nonumber\\
    &\qquad:=\left({\rm Tr}_v\left( O_{\rm obj}{\rm Tr}_h\left( \rho(\theta) - e^{-iH_k(\theta) |\delta_k|}\rho(\theta) e^{iH_k(\theta) |\delta_k|}  \right) \right)\right) ={\rm Tr}_v\left( O_{\rm obj}\sum_{q=1}^\infty \frac{ {\rm Tr}_h({\rm Ad}_{-iH_k(\theta)}^q(\rho(\theta)))|\delta_k|^q }{q!}  \right)  \label{eq:networkError}
\end{align}

Next, making the assumption that $H_k(\theta)$ and $\rho(\theta)$ are uncorrelated we can further simplify this result. Our exposition will now follow that of Popescu, Short and Winter~\cite{popescu2006entanglement}; which we modify to deal with to show that a concentration of measure exists for the commutators of $H_k(\theta)$ and $\rho(\theta)$.

We will now work under the assumption that the expectation values are independent.  We further denote the expectation value over the Hamiltonian as $\EX_H$ and the expectation value over the state as $\EX_\phi$. If this independence assumption holds then we need to argue about the magnitude of terms of the form
$\EX_\phi({\rm Tr_v}({\rm Tr}_h (H_{k}(\theta)\rho(\theta))^2)).$
We can estimate this by introducing two copies of the quantum state and linking both terms through the use of a flip operator $F_{vv'}$ such that
\begin{equation}
    F_{vv'}= \sum_{vv'} \ket{v'}\!\bra{v} \otimes \ket{v}\!\bra{v'}.
\end{equation}
In the following, we will use this notation primed indices to refer to the visible and hidden subsystems of the first and second copies respectively.

The commutators in general consist of many different products of $H_k$ and the state operator.  Below we argue about their form in generality.  Let us assume that $p_1,p_2,q_1,q_2$ are positive integers.  We then wish to compute the product of traces of of the form ${\rm Tr}_h(H_k(\theta)^{p_1} \rho H_k(\theta)^{p_2}{\rm Tr}_h(H_k(\theta)^{q_1} \rho H_k(\theta)^{q_2})$.
By applying the flip operator and taking the quantum state $\rho(\theta)$ to be $\ket{\phi}\!\bra{\phi}$

\begin{align}
&\EX_\phi({\rm Tr}_h(H_k(\theta)^{p_1}) \rho H_k(\theta)^{p_2}{\rm Tr}_h(H_k(\theta)^{q_1} \rho H_k(\theta)^{q_2}))\nonumber\\
&={\rm Tr_{vv'}}(\EX_\phi(({\rm Tr}_h (H_{k}^{p_1}(\theta)\rho(\theta)H_{k}^{p_2}(\theta))\otimes {\rm Tr}_h (H_{k}^{q_1}(\theta)\rho(\theta)H_{k}^{q_2}(\theta)))F_{vv'} ))\nonumber\\
&={\rm Tr}_{vv'}{\rm Tr}_{hh'}(\EX_\phi(({\rm Tr}_h (H_{k}^{p_1}(\theta)\rho(\theta)H_{k}^{p_2}(\theta))\otimes {\rm Tr}_h (H_{k}^{q_1}(\theta)\rho(\theta)H_{k}^{q_2}(\theta)))(F_{vv'}\otimes I )))\nonumber\\
&={\rm Tr}(\EX_\phi( (\ket{\phi}\!\bra{\phi}\otimes  \ket{\phi}\!\bra{\phi})(H_{k}^{p_1}(\theta)\otimes H_{k}^{q_1}(\theta))(F_{vv'} \otimes I)(H_{k}^{p_2}(\theta)\otimes H_{k}^{q_2}(\theta)) )).
\end{align}

The next step in this is to recognize that the above tensor products if $\ket{\phi}\!\bra{\phi}$ are a symmetric quantum state.  Therefore if we express the state as the sum of its anti-symmetric component and its symmetric component then the anti-symmetric component must be zero~\cite{popescu2006entanglement}.  We then see from the fact that $\rho(\theta)$ is assumed to be drawn from a unitary $2$-design that the expectation value is unitarily invariant and we can then follow the arguments laid out in~\cite{popescu2006entanglement} that
\begin{align}
    &\EX_\phi({\rm Tr}_h(H_k(\theta)^{p_1}) \rho H_k(\theta)^{p_2}{\rm Tr}_h(H_k(\theta)^{q_1} \rho H_k(\theta)^{q_2})) \nonumber\\
    &= \frac{2D^2}{D(D+1)} {\rm Tr}\Biggr(\EX_\phi\Biggr( \left(\frac{\Pi^{\rm sym}}{D^2}\right)(H_{k}^{p_1}(\theta)\otimes H_{k}^{q_1}(\theta))\nonumber\\
    &\qquad\times (F_{vv'} \otimes I)(H_{k}^{p_2}(\theta)\otimes H_{k}^{q_2}(\theta)) \Biggr)\Biggr) 
\end{align}
Next, if we define the flip operator on the dilated space including the hidden and visible units to be $F_{rr'}= F_{vv'} \otimes F_{hh'}$ then we can express $\Pi^{\rm sym} = \frac{1}{2}(I +F_{rr'})$.  Finally using the properties of the flip operator we find that we can  write 

\begin{align}
    &{\rm Tr}_v(\EX_\phi({\rm Tr}_h(H_k(\theta)^{p_1}) \rho H_k(\theta)^{p_2}{\rm Tr}_h(H_k(\theta)^{q_1} \rho H_k(\theta)^{q_2})))\nonumber\\
    &=\frac{D^2}{D(D+1)}{\rm Tr}\left(\frac{(H_k^{p_1}(\theta) \otimes H_k^{q_1}(\theta))(F_{vv'}\otimes I)(H_k^{p_2}(\theta) \otimes H_k^{q_2}(\theta))}{D^2} \right)\nonumber\\
    &\qquad+\frac{D^2}{D(D+1)}{\rm Tr}\left(\frac{(F_{vv'}\otimes F_{hh'})(H_k^{p_1}(\theta) \otimes H_k^{q_1}(\theta))(F_{vv'}\otimes I)(H_k^{p_2}(\theta) \otimes H_k^{q_2}(\theta))}{D^2} \right)\nonumber\\
    &=\frac{D^2}{D(D+1)}{\rm Tr}\left(\frac{(H_k^{p_1}(\theta) \otimes H_k^{q_1}(\theta))(F_{vv'}\otimes I)(H_k^{p_2}(\theta) \otimes H_k^{q_2}(\theta))}{D^2} \right)\nonumber\\
    &\qquad+\frac{D^2}{D(D+1)}{\rm Tr}\left(\frac{(H_k^{q_1}(\theta) \otimes H_k^{p_1}(\theta))(I\otimes F_{hh'})(H_k^{p_2}(\theta) \otimes H_k^{q_2}(\theta))}{D^2} \right)\nonumber\\
    &=\frac{D^2}{D(D+1)}{\rm Tr}\left(\frac{(F_{vv'}\otimes I)(H_k^{p_1+p_2}(\theta) \otimes H_k^{q_1+q_2}(\theta))}{D^2} \right)+\frac{D^2}{D(D+1)}{\rm Tr}\left(\frac{(I\otimes F_{hh'})(H_k^{q_1+p_2}(\theta) \otimes H_k^{p_1+q_2}(\theta))}{D^2} \right)\nonumber\\
    &=\frac{D^2}{D(D+1)}{\rm Tr}_v\left(\frac{{\rm Tr}_h(H_k^{p_1+p_2}(\theta)) {\rm Tr}_h( H_k^{q_1+q_2}(\theta))}{D^2} \right)+\frac{D^2}{D(D+1)}{\rm Tr}_h\left(\frac{{\rm Tr}_v(H_k^{q_1+p_2}(\theta)) {\rm Tr}_v( H_k^{p_1+q_2}(\theta))}{D^2} \right)\nonumber\\
\end{align}

We, therefore, have from the triangle inequality that

\begin{align}
    &\left| {\rm Tr}_v(\EX_\phi({\rm Tr}_h(H_k(\theta)^{p_1}) \rho H_k(\theta)^{p_2}{\rm Tr}_h(H_k(\theta)^{q_1} \rho H_k(\theta)^{q_2}))) - \frac{D^2}{D(D+1)}{\rm Tr}_v\left(\frac{{\rm Tr}_h(H_k^{p_1+p_2}(\theta)) {\rm Tr}_h( H_k^{q_1+q_2}(\theta))}{D^2} \right)\right|\nonumber\\
    &\qquad\le \left|{\rm Tr}_h\left(\frac{{\rm Tr}_v(H_k^{q_1+p_2}(\theta)) {\rm Tr}_v( H_k^{p_1+q_2}(\theta))}{D^2} \right)\right| \le \frac{{\rm Tr}(|H_k^{q_1+q_2}(\theta)|)\|{\rm Tr}_v (H_k(\theta)^{q_1+q_2})\|_\infty}{D^2}\le \frac{\|H_k\|_\infty^{p_1+p_2+q_1+q_2}}{D_h}.\label{eq:prodbd}
\end{align}

Here the last inequality follows from the fact that the Schatten infinity-norm is unitarily invariant and thus $\|H_k(\theta)\|_\infty = \|H_k\|_\infty$.

 Next let us consider the expectation value  for one of the terms in the expansion
\begin{align}
    &\EX(\|{\rm Tr}_h ({\rm Ad}_{-i H_k(\theta)}^q(\rho(\theta)))\|_1) \nonumber\\
    &\le \EX( \sqrt{D_v} \|{\rm Tr}_h ({\rm Ad}_{-i H_k(\theta)}^q(\rho(\theta)))\|_2) \nonumber\\
    &\le \sqrt{D_v\EX(\|{\rm Tr}_h ({\rm Ad}_{H_k(\theta)}^q(\rho(\theta)))\|_2^2)}
\end{align}

Every term in ${\rm Ad}_{H_k(\theta)}^q(\rho(\theta))$ consists of $q$ $H_k(\theta)$ and further $2^{q-1}$ terms have positive coefficient and $2^{q-1}$ terms have negative coefficient.  The proof of this fact is inductive.  For $q=1$, \begin{equation}
    {\rm Ad}_{H_k(\theta)}(\rho(\theta)) = H_k(\theta) \rho(\theta) - \rho(\theta) H_k(\theta),
\end{equation}
which demonstrates the base case of $q=1$. Now assume that the claim is valid for $q=p$ we then have that
\begin{equation}
    {\rm Ad}_{H_k(\theta)}^{p+1}(\rho(\theta)) = H_k(\theta){\rm Ad}_{H_k(\theta)}^{p}(\rho(\theta)) -{\rm Ad}_{H_k(\theta)}^{p}(\rho(\theta))H_k(\theta)\label{eq:stuff}
\end{equation}
The induction step immediately follows from this observation and it is clear that the claim is valid for all $q$.

Now if we expand $({\rm Tr}_h({\rm Ad}_{H_k(\theta)}^{q} (\rho(\theta))))^2$ using the linearity of the partial-trace operation we find that each term is of the form ${\rm Tr_h}(H_k^{p_1}(\theta) \rho(\theta) H_k^{p_2}(\theta)){\rm Tr_h}(H_k^{q_1}(\theta) \rho(\theta) H_k^{q_2}(\theta))$ where $p_1+p_2=q=q_1+q_2$.  The expression in~\eqref{eq:prodbd} then shows us that we can replace each term with $\frac{D^2}{D(D+1)} {\rm Tr}_v \left(\frac{({\rm Tr}_h(H_k^q(\theta)))^2}{D^2} \right)$ while incurring a small error.  Importantly, this value is independent of $p_1,p_2,q_1,q_2$.  Thus since there are $2^{q-1}$ such terms with negative coefficient and $2^{q-1}$ with positive coefficient for each of the partial traces there are similarly $2^{2q-1}$ terms with negative coefficient in the expansion and $2^{2q-1}$ with positive coefficient.  Ergo, the sums over all such terms present in the adjoint is zero up to the small error terms given in~\eqref{eq:prodbd}.  Thus we have that,
\begin{equation}
    \sqrt{D_v\EX(\|{\rm Tr}_h ({\rm Ad}_{H_k(\theta)}^q(\rho(\theta))\|_2^2)} \le (2\|H_k\|_\infty)^{q}\sqrt{\frac{D_v}{D_h}}\label{eq:unitaryArg}
\end{equation}
Next from~\eqref{eq:stuff} we have that
\begin{align}
    &\EX\left|{\rm Tr}_v\left( O_{\rm obj}\sum_{q=1}^\infty \frac{ {\rm Tr}_h({\rm Ad}_{-iH_k(\theta)}^q(\rho(\theta)))|\delta_k|^q }{q!}  \right)\right|  \nonumber\\
    &\le \|O_{\rm Obj}\|_\infty \sum_{q=1}^\infty \EX\left(\left\|  \frac{ {\rm Tr}_h({\rm Ad}_{-iH_k(\theta)}^q(\rho(\theta)))|\delta_k|^q }{q!}\right\|_1 \right)\nonumber\\
    &\le \|O_{\rm Obj}\|_\infty \sum_{q=1}^\infty \sqrt{D_v\EX\left(\left\|  \frac{ {\rm Tr}_h({\rm Ad}_{-iH_k(\theta)}^q(\rho(\theta)))|\delta_k|^q }{q!}\right\|_2^2 \right)}\nonumber\\
    &\le |\delta_k| \|O_{\rm obj}\|_\infty\|H_k\|_\infty e^{2\|H_k\|_\infty |\delta_k|}\sqrt{\frac{D_v}{D_h}},\nonumber\\
    &\in O\left(|\delta_k| \|O_{\rm Obj}\|_\infty \|H_k\|_\infty \sqrt{\frac{D_v}{D_h}} \right)
\end{align}
where we have used the assumption that $\|H_k\|_\infty |\delta_k|\in O(1)$.  From this our claim about the Lipshitz constant immediately follows from the definition of Lipshitz continuity and from~\eqref{eq:networkError}.
\end{proof}

\section{Proof of Quantum Boltzmann Machine Gradient}\label{app:QBM}
Here we provide a complete proof for the gradient of a quantum Boltzmann machine. We can always assume that the Hamiltonian is traceless. Indeed, for any Hamiltonian $H'$ with a non-zero trace, we can introduce a Hamiltonian $H=H'-\alpha \mathbb{1}$ such that $\text{Tr}(H)=0$ and $H$ leads to the same thermal state
\begin{equation}
    \rho_{thermal}=\frac{e^{-Ht}}{\text{Tr}(e^{-Ht})} = \frac{e^{-H't + \alpha \mathbb{1} t}}{\text{Tr}(e^{-H't + \alpha \mathbb{1} t} )} = \frac{ e^{\alpha t} e^{-H't  }}{e^{\alpha  t}\text{Tr}(e^{-H't} )} = \frac{e^{-H't}}{\text{Tr}(e^{-H't})} .
\end{equation}

\BMgradient*

\begin{proof}

First note that if we begin with an observable $O_{\rm obj}$ acting on the visible subspace then the difference between the observable for $\rho(\theta) = e^{-(H+\theta_k H_k)}/Z(\theta_k)$ and $\rho(0):=\rho$
\begin{align}
    \mathbb{E}(|{\rm Tr}( (O_{\rm obj}\otimes I_h)\rho(\theta_k)) - {\rm Tr}( (O_{\rm obj}\otimes I_h)\rho)|)&\le \|O_{\rm Obj}\|_\infty \EX( \|{\rm Tr}_h(\rho(\theta_k) - \rho)\|_1)\nonumber\\
    &\le\|O_{\rm Obj}\|_\infty \sqrt{D_v\EX( \|{\rm Tr}_h(\rho(\theta_k) - \rho)\|_2^2)}.\label{eq:diff}
\end{align}
Therefore just like the unitary network case, we will now focus our attention to bounding the expectation value of the difference between the density operators.  The main difference here is that the density operators are defined via imaginary time-evolution rather than real time.

From Taylor's theorem, we have that if the Hamiltonian $H+sH_k$ has no level crossings on the interval $s\in [0,\theta_k]$ then to order $O(\theta_k^2)$ the eigenvectors of $H+\theta_k H_k$ can be identified using perturbation theory.  In particular, for any $p\in \{0,\ldots,D-1\}$ let $\ket{p}$ be an eigenvector of $H$ with eigenvalue $E_p$ then the  eigenvector $\ket{n'}$ of $H+\theta_k$ that corresponds to the eigenvector $\ket{n}$ of $H$ can be expressed as
\begin{equation}
    \ket{n'} = \ket{n} + \theta_k\sum_{j\ne n}\frac{ \ket{j}\bra{j}H_k \ket{n}}{E_n-E_j}+O(\theta_k^2)
\end{equation}
This implies that
\begin{equation}
    {\rm Tr}_h(\ket{n'}\!\bra{n'} - \ket{n}\!\bra{n}) ={\rm Tr}_h\left(\theta_k\sum_{j\ne n}\frac{ \ket{j}\bra{j}H_k \ket{n}\!\bra{n}}{E_n-E_j}+\frac{ \ket{n}\!\bra{n}H_k \ket{j}\!\bra{j}}{E_n-E_j}\right)+O(\theta_k^2) 
\end{equation}
Next let us write for any $\ell \in \{0,\ldots,D-1\}$ the eigenvector $\ket{\ell} = \sum_{pq} \alpha_{pq}^{(\ell)} \ket{pq}$,
where $\alpha_{pq}^{(\ell)}$ is a complex number and $\ket{pq} := \ket{p}_v \otimes \ket{q}_h$ for some appropriate basis for the visible and hidden subsystems.  The expectation value over the state vectors can then be thought of as an average of these coefficients.

There are many choices that can be made for the eigenbasis that we further exploit the fact that $H_k := h_v \otimes h_h$ to choose the bases of the visible and hidden subsystems to diagonialze $h_v$ and $h_h$.  Thus we can state $h_v \otimes h_h \ket{pq} := \lambda_{pq} \ket{pq}$ for $\lambda_{pq} \in \mathbb{R}$.

With these choices in place we can write
\begin{align}
    {\rm Tr}_h\left(\sum_{j\ne n}\frac{ \ket{j}\bra{j}H_k \ket{n}\!\bra{n}}{E_n-E_j} \right)&= \sum_{j\ne n}{\rm Tr}_h\left(\sum_{pqrstuvw} \frac{\alpha_{pq}^{(j)} \alpha_{rs}^{*(j)}\alpha_{tu}^{(n)}\alpha_{vw}^{*(n)}}{E_n-E_j}  \ket{pq}\!\bra{rs} H_k\ket{tu}\!\bra{vw} \right)\nonumber\\
    &= \sum_{j\ne n}{\rm Tr}_h\left(\sum_{pqrsvw} \frac{\alpha_{pq}^{(j)} \alpha_{rs}^{*(j)}\alpha_{rs}^{(n)}\alpha_{vw}^{*(n)}\lambda_{rs}}{E_n-E_j}  \ket{pq}\!\bra{vw} \right)\nonumber\\
    &= \sum_{j\ne n}\left(\sum_{pqrsv} \frac{\alpha_{pq}^{(j)} \alpha_{rs}^{*(j)}\alpha_{rs}^{(n)}\alpha_{vq}^{*(n)}\lambda_{rs}}{E_n-E_j}  \ket{p}\!\bra{v} \right)
\end{align}

\begin{align}
    &{\rm Tr}_h(\ket{n'}\!\bra{n'} - \ket{n}\!\bra{n})^2= \left(\sum_{j\ne n}\left(\sum_{pqrsv} \frac{(\alpha_{pq}^{(j)} \alpha_{rs}^{*(j)}\alpha_{rs}^{(n)}\alpha_{vq}^{*(n)}+\alpha_{pq}^{(n)} \alpha_{rs}^{*(n)}\alpha_{rs}^{(j)}\alpha_{vq}^{*(j)})\lambda_{rs}}{E_n-E_j}  \ket{p}\!\bra{v} \right) \right)^2+O(\theta_k^4)\nonumber\\
    &=\sum_{j,j'\ne n}\sum_{\stackrel{pqrsv}{q'r's'v'}} \frac{(\alpha_{pq}^{(j)} \alpha_{rs}^{*(j)}\alpha_{rs}^{(n)}\alpha_{vq}^{*(n)}+\alpha_{pq}^{(n)} \alpha_{rs}^{*(n)}\alpha_{rs}^{(j)}\alpha_{vq}^{*(j)})(\alpha_{vq'}^{(j)} \alpha_{r's'}^{*(j)}\alpha_{r's'}^{(n)}\alpha_{v'q'}^{*(n)}+\alpha_{vq'}^{(n)} \alpha_{r's'}^{*(n)}\alpha_{r's'}^{(j)}\alpha_{v'q'}^{*(j)})\lambda_{rs}\lambda_{r's'}}{(E_n-E_j)(E_n-E_{j'})}  \ket{p}\!\bra{v'} +O(\theta_k^4)
\end{align}
There are a total of $8$ terms that arise when we expand the above products.  Let us consider the first case which emerges in the expectation value of the trace of the previous result.  Here we will invoke the fact that the eigenvectors are sampled from a unitary $2$-design, which means that any quantity that is at most quadratic in the probability will have an expectation value that coincides with the Haar average.  A final point to note, is that as a consequence of unitary invariance and the discussion contained in \cite[Appendix A]{babbush2015chemical}, the expectation value of the product of any two terms is zero unless all of their indices match.  Further, up to relative errors that are $O(1/D)$, the expectation values of the coefficients are independent of each other.  This implies that if ${\rm Tr}(h_h)=0$ then
\begin{align}
    &\EX\left(\sum_{j,j'\ne n}\sum_{\stackrel{pqrsv}{q'r's'v'}} \frac{\alpha_{pq}^{(j)} \alpha_{rs}^{*(j)}\alpha_{rs}^{(n)}\alpha_{vq}^{*(n)}\alpha_{vq'}^{(j)} \alpha_{r's'}^{*(j)}\alpha_{r's'}^{(n)}\alpha_{pq'}^{*(n)}\lambda_{rs}\lambda_{r's'}}{(E_n-E_j)(E_n-E_{j'})} \right)\nonumber\\
    &=O\Biggr(\Biggr(\sum_{j,j'\ne n}\sum_{pqq'} \frac{\EX(|\alpha_{pq}^{(j)}|^2)\EX(|\alpha_{pq}^{(n)}|^2)\EX(|\alpha_{pq'}^{(j')}|^2)\EX(|\alpha_{pq'}^{(n)}|^2)\lambda_{pq}\lambda_{pq'}}{(E_n-E_j)(E_n-E_{j'})} \Biggr)\nonumber\\
    &\qquad+\Biggr(\sum_{j,j'\ne n}\sum_{pqv} \frac{\EX(|\alpha_{pq}^{(j)}|^2)\EX(|\alpha_{pq}^{(n)}|^2)\EX(|\alpha_{vq}^{(n)}|^2)\EX(|\alpha_{pq}^{(j')}|^2)\lambda_{pq}^2}{(E_n-E_j)(E_n-E_{j'})} \Biggr)\Biggr)\nonumber\\
    &\qquad+\Biggr(\sum_{j,j'\ne n}\sum_{pq} \frac{\EX(|\alpha_{pq}^{(n)}|^4)\EX(|\alpha_{pq}^{(j)}|^2)\EX(|\alpha_{pq}^{(j')}|^2)\lambda_{pq}^2}{(E_n-E_j)(E_n-E_{j'})} \Biggr)\Biggr)\nonumber\\
    &=O\Biggr(\frac{1}{D^4} \Biggr(\sum_{j\ne n} \frac{1}{E_n - E_j} \Biggr)^2\Biggr(({\rm Tr}(h_v^2) ({\rm Tr}(h_h))^2 + (D_v+1){\rm Tr}(H_k^2) \Biggr)\nonumber\\
    &=O\Biggr(\frac{1}{D^4} \Biggr(\sum_{j\ne n} \frac{1}{E_n - E_j} \Biggr)^2\Biggr(\left(\frac{{\rm Tr}(h_h)^2}{{\rm Tr}(h_h^2)} + D_v\right){\rm Tr}(H_k^2) \Biggr)\nonumber\\
    &=O\Biggr(\frac{D_v}{D^3} \Biggr(\sum_{j\ne n} \frac{1}{E_n - E_j} \Biggr)^2 \left(\frac{{\rm Tr}(h_h)^2}{D_v{\rm Tr}(h_h^2)} + 1\right)\|H_k\|_\infty^2 \Biggr)\label{eq:term1}
\end{align}
Now under the assumption that 
$$
\mathbb{E}\left(\Biggr(\sum_{j\ne n} \frac{1}{E_n - E_j} \Biggr)^2 \right)\in {O}\left({\Gamma^2D^2} \right),
$$
it follows that this term is asymptotically bounded above by $O(\Gamma^2\|H_k\|_\infty^2 /D_h)$.  It is straight forward to verify that the same bound holds for all remaining $4$ products in the expansion.

It then follows from~\eqref{eq:diff} and~\eqref{eq:term1} that
\begin{align}
    \left|\partial_{\theta_k}( \mathbb{E}(|{\rm Tr}( (O_{\rm obj}\otimes I_h)\rho(\theta_k))))\right|&=\left|\lim_{\theta_k \rightarrow 0}  \frac{\mathbb{E}(|{\rm Tr}( (O_{\rm obj}\otimes I_h)\rho(\theta_k)) - {\rm Tr}( (O_{\rm obj}\otimes I_h)\rho)|)}{\theta_k}\right|\nonumber\\
    &\in {O}\left( \|O_{\rm obj}\|_\infty\Gamma\|H_k\|_\infty \sqrt{\frac{D_v}{D_h}\left(\frac{{\rm Tr}(h_h)^2}{D_v{\rm Tr}(h_h^2)} + 1\right)}\right)
\end{align}

The claim that this bound on the derivative holds with high probability over the ensemble is then a direct consequence of Markov's inequality.
\end{proof}

\end{document}